\newtheorem{theorem}{Theorem}
\newtheorem{proposition}{Proposition}
\newtheorem{lemma}{Lemma}
\newtheorem{definition}{Definition}
\theoremstyle{definition}
\newenvironment{proof-sketch}{\noindent{\it PROOF (Sketch).}\hspace*{1em}}{\qed}
\newcommand{\sgn}{\operatorname{sgn}} 
\let\mathbbm\mathds
\title{Partial Truthfulness in Minimal Peer Prediction Mechanisms with Limited Knowledge
}
\author{
Goran Radanovic\\
Harvard University\\
Cambridge, USA\\
gradanovic@g.harvard.edu\\
\And
Boi Faltings\\
EPFL\\
Lausanne, Switzerland\\
boi.faltings@epfl.ch\\
}
\begin{document}

\maketitle

\begin{abstract}
We study minimal single-task peer prediction mechanisms that have limited knowledge about 
agents' beliefs. Without knowing what agents' beliefs are or
eliciting additional information, it is not possible to design a truthful mechanism in a Bayesian-Nash 
sense. We go beyond truthfulness and explore equilibrium strategy profiles that are only 
partially truthful. Using the results from the multi-armed bandit literature, we give a characterization 
of how inefficient these equilibria are comparing to truthful reporting. We measure the inefficiency of 
such strategies by counting the number of dishonest reports that any minimal knowledge-bounded 
mechanism must have. We show that the order of this number is $\Theta(\log n)$, where $n$ is the number 
of agents, and we provide a peer prediction mechanism that achieves this bound in expectation.  
\end{abstract}

\section{Introduction}

One of the crucial prerequisites for a good decision making procedure is the availability of 
accurate information, which is often distributed among many individuals. Hence, 
elicitation of distributed information represents a key component in many systems 
that are based on informative decision making. 
Typically, such an information elicitation 
scenario is modeled by representing individuals as rational agents who are willing to 
report their private information in return for (monetary) rewards. 

We study a setting in which reports cannot be directly verified, as it is the case when 
eliciting opinions regarding the outcome of a hypothetical question \cite{Garcin:2014:SOP:2892753.2892963}. 
Other examples include: product reviewing, where the reported information can describe ones 
taste, which is inherently subjective\footnote{As, for example, in rating a restaurant 
or a hotel on TripAdvisor (www.tripadvisor.com).}; peer grading, where a rater needs to grade
an essay; or eliciting information that is highly distributed, as in community sensing. 

Since a data collector cannot directly verify the reported information, it can score
submitted reports only by examining consistency among them. Such an approach is 
adopted in \textit{peer prediction mechanisms}, out of which the most known examples 
are the \textit{peer prediction method} \cite{MRZ:05} and the \textit{Bayesian truth serum} \cite{P:04}. 
While there are different ways of classifying peer prediction mechanisms, 
the most relevant one for this work distinguishes two 
categories of mechanisms by: 1) the amount of additional information they elicit from agents;
2) the knowledge they have about agents' beliefs.

The first category includes \textit{minimal} mechanisms that elicit only desired private information, 
but have some additional information that enables truthful elicitation. For instance, the classical peer 
prediction \cite{MRZ:05} assumes knowledge about how an agent forms her beliefs regarding the 
private information of other agents, while other mechanisms (e.g., see \cite{JF:11,W:14}) relax the amount 
of knowledge they need by imposing different restrictions on the agents' belief structures.    
Mechanisms from the second category elicit additional information to compensate for the lack of 
knowledge about agents' beliefs. For example, the Bayesian truth serum \cite{P:04}, and its extensions
\cite{WP:12a,RF:13,RF:14,KS:16a}, elicit agents' posterior beliefs. 

The mentioned mechanisms are designed for a \textit{single-task} elicitation scenario in which an agent's 
private information can be modeled as a sample from an unknown distribution. In a basic elicitation setting,
agents share a common belief system regarding the parameters of the setting \cite{MRZ:05,P:04}. While 
the mechanisms typically allow some deviations from this assumption \cite{FW:16,RF:14}, these deviations 
can be quite constrained, especially when private information has a complex structure.\footnote{For example, \cite{RF:14} 
show that one cannot easily relax the common prior condition when agents' private information
is real-valued.} 
 
We also mention mechanisms that operate in 
more specialized settings that allow agents to have more heterogeneous beliefs. Peer prediction 
without a common prior \cite{WP:12b} is designed for a setting in which a mechanism can clearly separate 
a period prior to agents acquiring their private information from the period after the acquisition. It elicits 
additional information from agents which corresponds to their prior beliefs.   
More recently, many mechanisms have been developed for a multi-task elicitation scenario 
\cite{DG:13,K:15,RF:16b,Shnayder:2016:EC}, primarily designed for crowdsourcing settings. 
The particularities of the multi-task setting enable the mechanisms to implicitly 
extract relevant information important for scoring agents (e.g., agents' prior beliefs).
For more details on peer prediction mechanisms, we refer the reader to \cite{FR:17}.

Clearly, there is a tradeoff between the assumed knowledge and the amount of elicited information.
The inevitability of such a trade-off can be expressed by a result of \cite{RF:13}, which states that in
the single-task elicitation setting, no minimal mechanism can achieve truthfulness for a general 
common belief system. 

\textbf{Contributions.} In this paper, we investigate single-task \textit{minimal} peer prediction 
mechanisms that have only \textit{limited} information about the agents' beliefs, which precludes 
them from incentivizing all agents to report honestly. To characterize the inefficiency 
of such an approach, we introduce a concept of \textit{dishonesty limit} that measures the minimal number 
of dishonest agents that any minimal mechanism with limited knowledge must allow. To the best of our knowledge, 
no such characterization has ever been proposed for the peer prediction setting.
Furthermore, we provide a mechanism that reaches the lower bound on the number of dishonest agents. 
Due to the fact that the bound is logarithmic in the number of reports, aggregated reports converge to
the true aggregate. 
Unlike the mechanism of \cite{JF:08,FJR:17}, that also has a goal of eliciting an accurate aggregate, 
our mechanism does not require agents to learn from each other's reports.    

The full proofs to our claims can be found in the appendix. 

\section{Formal Setting}

We study a standard peer prediction setting where agents are assumed to have a \textit{common
belief} regarding their private information \cite{MRZ:05,P:04}. 
In the considered setting, a mechanism has almost no knowledge about the agents' belief structure, 
which makes the elicitability of truthful information more challenging (e.g., \cite{RF:13}). 
We define our setting as follows. 

There are $n >> 1$ agents whose arrival to the system is stochastic. We group agents by their arrival so 
that each group has a fixed number of $k << n$ agents, and we consider participation period of a 
group as a time $t$.

To describe how agents form beliefs about their private information, we introduce a state $\omega$, which 
is a random variable that takes values in set $\Omega$, which is assumed to be a (real) interval. 
We denote the associated distribution by $p(\omega)$, and assume that $p(\omega) > 0$ for all $\omega \in \Omega$. 

An agent's private information, here called \textit{signal}, is modeled with a generic random 
variable $X$ that takes values in a finite discrete set $\{0, 1, ..., m - 1\}$ whose generic values
are denoted by $x$, $y$, $z$, etc. For each agent $i$, her signal is generated 
independently according to a distribution $\Pr(X_i|\omega)$ that depends on state variable $\omega$. This distribution is common
for agents, i.e., $\Pr(X_i|\omega) = \Pr(X_j|\omega)$ for two agents $i$ and $j$, and it is \textit{fully mixed},
i.e., for all $x \in \{0, ..., m - 1\}$ it holds that $\Pr(X = x|\omega) > 0$. 
Furthermore, we assume that private signals are \textit{stochastically relevant} \cite{MRZ:05},
meaning  that posterior distributions $\Pr(X_j | X_i = x)$ and $\Pr(X_j | X_i = y)$ (obtained from 
$\Pr(X_i|\omega)$,  $\Pr(X_j|\omega)$ and $p(\omega)$) differ for at least one value of $X_j$ whenever
$x \ne y$.  Agents share a common belief about the model parameters of the model ($\Pr(X|\omega)$ and $p(\omega)$), 
so we denote these beliefs in the same way.

Agents report their private information (signals) to a mechanism, for which they get compensations 
in terms of rewards. Agents might not be honest, so to distinguish the true signal $X$ from the reported 
one, we denote reported values by $Y$. 
Since our main result depends on agents' coordination, we also 
introduce a noise parameter that models potential imperfections in reporting strategies. In particular, 
we assume that with probability $1-\epsilon \in (0, 1]$ an agent is \textit{rational} and reports a value that 
maximizes her expected payoff, 
while otherwise she heuristically reports a random value from $\{ 0, ..., m - 1 \}$. 
Notice that we do not consider adversarial agents. 
Furthermore, while in the development of our formal results we assume that $\epsilon$ is not dependent on $X_i$, 
we also show how to apply our main mechanism when such a bias exists (Section {\em Additional Considerations}).

A mechanism needs not know the true value of $\epsilon$; it only needs to have an estimate 
$\hat \epsilon$ that is in expectation equal to $\epsilon$, and we show how to obtain $\hat \epsilon$
from the reports. Furthermore, the belief of a rational agent $i$ incorporates the fact that a peer report $Y_{j}$ 
is noisy, which means that $\Pr(Y_j = x|X_i) = (1- \epsilon) \cdot \Pr(\bar X_j = x|X_i) + \frac{\epsilon}{m}$,
where $\bar X_j$ is the value that a rational peer would report. 
 
Beliefs about an agent's signal or her report are defined on the probability simplex in $m$-dimensional space, 
that we denote by $\mathcal P$. To simplify the notation for beliefs, we often omit $X$ and $Y$ symbols.
In particular, instead of using $\Pr(X = x|\omega)$, we simply write  $\Pr(x|\omega)$, or instead of using 
$\Pr(X_j = y| X_i =  x)$, we write $\Pr(y|x)$. 

The payments of a mechanism are denoted by $\tau$ and they are applied on each group of $k$ agents separately. 
We are interested in $1$-peer payment mechanisms that reward an agent $i$ by using her one peer $j$, i.e.,
the reward function is of the form $\tau(Y_i, Y_j)$.  As shown in \cite{RF:13}, this restriction does not
limit the space of strictly incentive compatible mechanisms when agents' beliefs are not narrowed by particular 
belief updating conditions. Furthermore, we distinguish the notion of a mechanism, here denoted by $\mathcal M$, 
from a peer prediction payment function $\tau$ because different payment functions could be used on different 
groups of agents, i.e., at different time periods $t$.  

\textbf{Solution concept.} From the perspective of rational agents, our setting has a form of a Bayesian game,
hence we explore strategy profiles that are Bayesian-Nash equilibria. 
We are particularly interested in strict equilibria, in which rational agents have 
strict incentives not to deviate.   
Any mechanism that adopts honest reporting 
as a strict Bayesian-Nash equilibrium is called strictly Bayesian-Nash 
\textit{incentive compatible} (BNIC). 

\section{Our Approach}\label{sec_approach}

Let us begin by describing our approach in dealing with the impossibility of truthful minimal 
knowledge-bounded elicitation. A mechanism that we are building upon is 
described by the payment rule:
\begin{align}\label{eq_pts}
\tau(x,y,P) = d +  c \cdot \begin{cases}
\frac{1}{P(x)} &\mbox{ if } x = y\\
0 & \mbox{ otherwise }
\end{cases}
\end{align}
and is called \textit{the peer truth serum} (PTS) \cite{FPBJ:14}. $P$ is 
a fully mixed distribution that satisfies:
\begin{align}\label{eq_self_prediction}
\frac{\Pr(x|x)}{P(x)} > \frac{\Pr(y|x)}{P(y)}, y \ne x
\end{align}
Provided that other rational agents are honest, the expected payoff of a rational agent with signal 
$x$ for reporting $y$ is:
\begin{align*}
c \cdot \frac{(1-\epsilon) \cdot \Pr(y|x) + \frac{\epsilon}{m}}{P(y)}  + d
\end{align*} 
Selecting proper values for $c$ and $d$ is an orthogonal problem to the one addressed in the paper, 
and is typically achieved using a separate mechanism \cite{RF:16b}, a pre-screening process \cite{DG:13}, 
or by learning \cite{DBLP:conf/ijcai/LiuC16}. 
However, we do set $d = -c \cdot \frac{\hat \epsilon}{m\cdot P(y)}$ so that the expected payoff 
is proportional to $\frac{\Pr(y|x)}{P(y)}$ (because $\hat \epsilon = \mathbbm E(\epsilon)$). In other words,
we remove an undesirable skew in agents' expected payoffs that might occur due to the presence of 
non-strategic reports.\footnote{Furthermore, notice that by setting $c $ proportional to 
$\prod_x P(x)$, we can bound PTS payments so that they take values in $[0, 1]$.}

Without additional restrictions on agents' beliefs, 
it is possible to show that the PTS mechanism is uniquely truthful \cite{FW:16}.  
Condition \eqref{eq_self_prediction} is called the \textit{self-predicting condition}, and it is crucial for ensuring the 
truthfulness of PTS.\footnote{The condition is typically defined for $P(x)$ equal to the prior $\Pr(x)$ \cite{JF:11}, but we generalize it here.}
We say that a distribution $P$ is \textit{informative} if it satisfies the self-predicting condition. Instead of assuming that a specific \textit{a priori} 
known $P$ satisfies condition \eqref{eq_self_prediction}, we show that there always exists a certain set of distribution functions for which the 
condition holds, and although this set is initially not known, we show that one can learn it by examining the statistics of reported values for 
different reporting strategies.  

\subsection{Phase Transition Diagram}

We illustrate the reasoning behind our approach and a novel mechanism on a binary answer space $\{0, 1\}$. In this case, 
it has been shown that if we set $P(x)$ to an agent's prior belief $\Pr(x)$, the self-predicting condition is satisfied, and, consequently, the
PTS mechanism is BNIC \cite{W:14}. However, in our setting, a mechanism has no knowledge about $\Pr(x)$.

Consider what happens when $P(x)$ is much smaller than $\Pr(x)$. For signal value $y \ne x$, this means that $P(y)$ is much larger
than $\Pr(y)$. If an agent observes $x$, her expected payoff when everyone is truthful is proportional to:
\begin{align*}
\frac{\Pr(x|x)}{P(x)} > \frac{\Pr(x|x)}{\Pr(x)} >  \frac{\Pr(y|x)}{\Pr(y)}
\end{align*}
where the last inequality is due to the self-predicting condition. Therefore, agents who observe $x$ are incentivized to report it. However,
agents who observe $y$ might not be incentivized to report truthfully, because if $\frac{\Pr(x)}{P(x)} > \frac{\frac{\Pr(y|y)}{\Pr(y)}}{\frac{\Pr(x|y)}{\Pr(x)}}$,
we have: 
\begin{align*}
\frac{\Pr(x|y)}{P(x)} = \frac{\Pr(x|y)}{\Pr(x)} \cdot \frac{\Pr(x)}{P(x)}  > \frac{\Pr(y|y)}{P(y)}
\end{align*}
In this case, one would naturally expect that both observations $x$ and $y$ lead to report $x$, 
and it is easy to verify that this is an equilibrium of the PTS mechanism. Namely, the expected payoffs for 
reporting $x$ only increase if more agents report $x$. Similarly, when $P(x)$ is much larger than $\Pr(x)$, 
one would expect that agents would report $y$. 

\setlength{\abovecaptionskip}{-5pt}
\setlength{\belowcaptionskip}{-5pt}
\begin{figure}[!h]
\centering
\includegraphics[width=0.67\columnwidth]{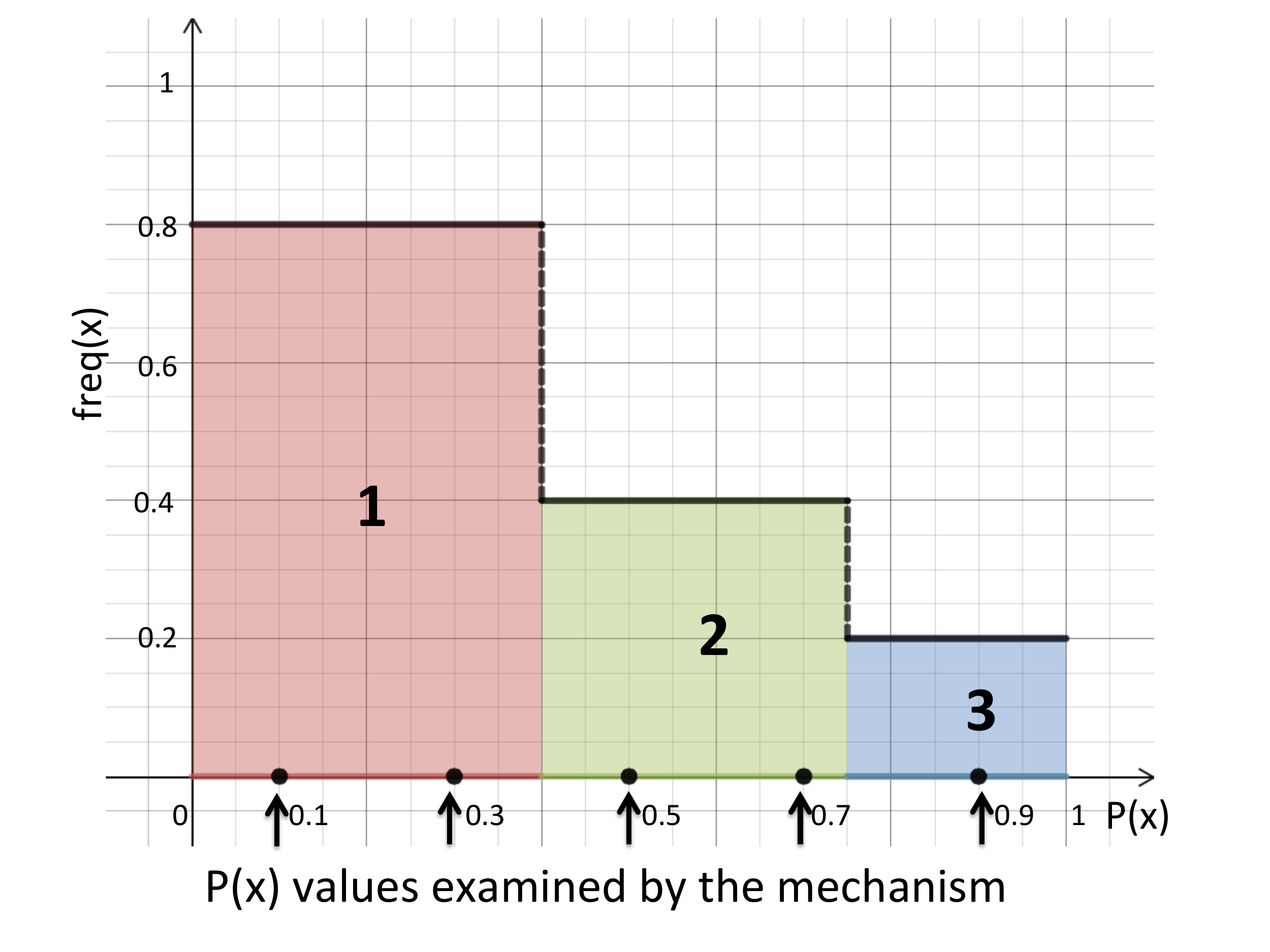}
\caption{Frequency of reports equal to $x$ for different P(x).}
\label{fig:PartialTruth_basic}
\end{figure} 

With this reasoning, we can construct a \textit{phase transition} diagram, that shows how the expected 
frequency of reports equal to $x$ changes with the increase of  $P(x)$, for a fixed posterior beliefs 
$\Pr(\cdot|x)$ and $\Pr(\cdot |y)$,  $y \ne x$. The diagram is shown in Figure \ref{fig:PartialTruth_basic}, 
and it has three phases:
\begin{itemize}
\item Phase 1, in which agents are not truthful and report $x$.  
\item Phase 2, in which agents are truthful.
\item Phase 3, in which agents are not truthful and report $y$.
\end{itemize} 
Notice that not all reports are equal to $x$ in phase 1 nor equal to $y$ in phase 3. This is due to the presence of 
noisy reports. However, noisy reports are unbiased, 
so the frequency $freq_{2}(x)$ of the truthful reporting phase is by Euclidian distance closer to $\frac{1}{2}$ than are the 
frequencies of the other two phases:
\begin{align*}
\left (\frac{1}{2} - freq_{2}(x) \right )^2 <  \left (\frac{1}{2} - freq_{i}(x) \right )^2
\end{align*} 
where $i \in \{1,3\}$, which gives us:
\begin{align*}
freq_{2}(x)\cdot \left ( 1 - freq_{2}(x) \right ) > freq(x)_{i} \cdot \left ( 1 - freq_{i}(x) \right )
\end{align*}
As the expression also holds for signal $y \ne x$, it follows that the disagreement among reports, i.e., probability that 
the two reports do not match, is (strictly) maximized in the truthful reporting phase. Therefore, we can use the disagreement 
as an indicator of whether agents are truthful or not. 

Furthermore, if $\hat \epsilon$ needs to be obtained from the reports, it is enough to 
acquire responses of $k$ agents using PTS that has $d = 0$ and $P$ such that agents are clearly 
incentivized to report a specific value. For example, one can set $P(x)$ to a small value, 
and define $\hat \epsilon = 1-num(x)/k$, where $num(x)$ is the number of agents who reported $x$. 
Notice that $\hat \epsilon$ is in expectation equal to $1-freq_1(x) = \epsilon$, and this generalizes
to the non-binary case (but also $\epsilon$ that depends on $x$ (biased noise)).

\subsection{Mechanism: Adaptive PTS (AdaPTS)}

Based on the previous observations, we now construct a novel elicitation mechanism: 
AdaPTS. The first step of the AdaPTS mechanism 
is to divide probability simplex $\mathcal P$ into regions and accordingly 
sample from each region one \textit{fully mixed} representative $P$. 
In Figure \ref{fig:PartialTruth_basic}, these representatives are shown
as black points on the horizontal axis and the division of the simplex is done uniformly. 
The granularity of the division should be fine enough so that at least one representative $P$ falls into the truthfulness 
phase. To achieve this, the PTS mechanism must have some knowledge about the agents' belief structure, but this knowledge can
be very limited. For example, to properly divide $\mathcal P$, it is enough to know the lower bound 
on the size of the region that contains distributions $P$ for which PTS is BNIC. 

Furthermore, in such a discretization, one can always choose representative $P$ distributions that are in the interior 
of the probability simplex, thus avoiding potential divisions by $0$ in equation \eqref{eq_pts}. Notice that
it is also possible to bound payments to a desired interval by choosing an appropriate value of $c$ (see Footnote 2).

Now, the AdaPTS mechanism should define payment function $\tau$ before each time step $t$, i.e., before a considered group of $k$ agents start submitting their reports.
We want to maximize the number of honest agents, without knowing for which representative distributions $P$ agents are honest. This can be translated to 
a (stochastic) multi-armed bandit (MAB) setting\footnote{A most basic $K$-armed bandit problem is defined by random variables 
$r_{i,t}$, where  $i$ represent the arm of a bandit (gambling machine) and  $r_{i,t}$ represents the reward (feedback)
obtained by pulling the arm $i$ at time step $t$. The goal is to maximize the total reward by sampling one arm at each time step. } 
(e.g., see \cite{10.2307/1427934,A:02,Audibert:2009:ETU:1519541.1519712,Audibert:2010:RBM:1756006.1953023,DBLP:journals/jmlr/GarivierC11}) with arms defined as representative distributions $P$ and the optimization goal defined as maximizing the number of honest agents. 

As argued in the previous paragraph, the latter is the same as maximizing  
the disagreement among reports. More precisely, we define our objective function 
(feedback of MAB) as an indicator function that counts the disagreements among
the reports of $k$ agents: 
\begin{align}\label{eq_disagree_fun}
I(Y_1, ..., Y_k) = \frac{1}{k \cdot (k-1)}\sum_{i} \sum_{j \ne i} \mathbb 1_{Y_i \ne Y_j}
\end{align}
Notice that the indicator function depends on a chosen representative distribution $P$, while
its associated distribution is dependent on agents' strategies and the underlying distribution 
from which agents' private signals are sampled. 
Therefore, at time step $t$, AdaPTS considers a group of $k$ agents, selects a representative $P$ distribution
according to a MAB algorithm, and scores agents using the PTS mechanism with the chosen representative 
$P$. After receiving the reports of agents, the mechanism updates the parameters of the MAB algorithm.

Although we could use any MAB algorithm with desirable regret features, in the following text we 
restrict our attention to UCB1 \cite{A:02}. Algorithm \ref{AdaPTS} depicts the pseudocode of AdaPTS based on the UCB1 algorithm 
\cite{A:02}. Function $Discretize$ returns the set of representative $P$ distributions for a given granularity $\gamma$, e.g., by uniformly 
discretizing the probability simplex as shown in Figure \ref{fig:PartialTruth_basic}.
Function $Reports$ collects the reports of $k$ agents whose rewards are then calculated using PTS mechanism with parameter $P$, 
where the peer of agent $i$ is determined by index $(i+1) \mod k$.  Function $EpsilonEstimate$ estimates the value $\epsilon$, e.g.,
by acquiring responses of $k$ agents for extremal values of $P$, as described in the previous subsection. 

\begin{algorithm}\label{AdaPTS}
\SetKwBlock{When}{when}{endwhen} 

\KwData{Granularity $\gamma > 0$, Scaling $c$;}

\Begin{

$RP = Discretize(\mathcal P, \gamma)$\;

$\hat \epsilon = EpsilonEstimate(RP, k)$\;

\For{$t = 1$ to $t = |RP|$}{
	$P = RP[t]$\;
	$\mathbf x[1:k] = Reports(P, k)$\;
	$reward[i \in [1:k]] = c \cdot \frac{\mathbbm 1_{\mathbf x[i] = \mathbf x[(i + 1) \mod k]}}{P(\mathbf x[i])} -c \cdot \frac{\hat \epsilon}{m\cdot P( \mathbf x[i])}$\;
	$I[P] = \frac{1}{k \cdot (k - 1)}\sum_{i,j \ne i \in [1:k]} \mathbbm 1_{\mathbf x[i] \ne \mathbf x[j]}$\;
	$N[P] = 1$\;
}

\For{$t = |RP| + 1$ to $t = T$}{
$P = RP \left [\arg\max_{i} \left \{ \frac{I[RP[i]]}{N[RP[i]]} + \sqrt{\frac{2\cdot \ln(t-1)}{N[RP[i]]}} \right \} \right]$\;
$\mathbf x[1:k] = Reports(P, k)$\;
$reward[i \in [1:k]] =c \cdot  \frac{\mathbbm 1_{\mathbf x[i] = \mathbf x[(i + 1) \mod k]}}{P(\mathbf x[i])} -c \cdot \frac{\hat \epsilon}{m\cdot P( \mathbf x[i])}$\;
$I[P] = I[P] + \frac{1}{k \cdot (k - 1)}\sum_{i,j \ne i \in [1:k]} \mathbbm 1_{\mathbf x[i] \ne \mathbf x[j]}$\;
$N[P] = N[P] + 1$\;
}
}
\caption{AdaPTS}
\end{algorithm}   

\section{Analysis}
 
We first start by examining particular properties of our setting, which imply the difficulty of our problem and also 
lead us towards our main results. The major technical difficulty is to show that there exists a distribution $P$ 
for which mechanism \eqref{eq_pts} is truthful. This is not a trivial statement, since the original PTS mechanism 
requires an additional condition to hold, which is not necessarily satisfied in our setting. Furthermore, 
we also need to show that \eqref{eq_disagree_fun} is an appropriate indicator function. Given these two results, 
we can apply the results from multi-armed bandit literature (\cite{LR:85,A:02}) to 
derive the logarithmic bounds on the dishonesty limit. 

\subsection{Correlation Among Signal Values}

The first property we show is that there exist a limit on how different signal values can be correlated 
in terms of agents posterior beliefs. In particular,  if an agent endorses information $x$, there is an
upper bound on the value of her belief about a peer agent endorsing information $y \ne x$. 

\begin{lemma}\label{lm_correlation}
We have:
\begin{align*}
\Pr(x|x) > \Pr(x),\forall x
\end{align*}
Furthermore, it holds that:
\begin{align*}
&\Pr(x|x) \cdot \Pr(y|y) > \Pr(y|x) \cdot \Pr(x|y), \forall y \ne x
\end{align*}
or more generally:
\begin{align*}
\Pr(x_1|x_1) \cdots \Pr(x_{m'}|x_{m'}) > \Pr(x_2|x_1) \cdots \Pr(x_1|x_{m'})
\end{align*}
where $1 < m' \le m$, $x_i \in \{0, ..., m - 1\}$, $x_i \ne x_j$ for $i \ne j$.
\end{lemma}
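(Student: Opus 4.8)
The plan is to exploit the conditional-independence structure of the model: given the state $\omega$, the signals are i.i.d., so writing $f_x(\omega)=\Pr(x\mid\omega)$ and $d\mu=p(\omega)\,d\omega$, every quantity in the statement becomes an $L^2(\mu)$ inner product. Concretely, $\Pr(x)=\int f_x\,d\mu=\langle f_x,1\rangle$ and $\Pr(y\mid x)=\frac{\int f_x f_y\,d\mu}{\int f_x\,d\mu}=\frac{\langle f_x,f_y\rangle}{\langle f_x,1\rangle}$. First I would record two structural facts: full mixing gives $f_x(\omega)>0$ everywhere, so all inner products $\langle f_x,f_y\rangle$ are strictly positive; and stochastic relevance is equivalent to the family $\{f_x\}$ being pairwise non-proportional, since $f_x=\lambda f_y$ $\mu$-a.e. would force $\Pr(\cdot\mid x)=\Pr(\cdot\mid y)$, contradicting the stated relevance condition.

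Next I would handle the general cyclic inequality, of which the second inequality is the case $m'=2$. Setting $G_{ij}=\langle f_{x_i},f_{x_j}\rangle$ and clearing the common denominator $\prod_i\langle f_{x_i},1\rangle=\prod_i\Pr(x_i)$, which appears identically on both sides, reduces the claim to $\prod_{i=1}^{m'}G_{ii}>\prod_{i=1}^{m'}G_{i,i+1}$ with indices taken cyclically mod $m'$. Each consecutive factor obeys Cauchy--Schwarz, $G_{i,i+1}\le\sqrt{G_{ii}\,G_{i+1,i+1}}$; multiplying these bounds around the cycle, every diagonal entry $G_{ii}$ occurs under a square root exactly twice, so the product of the upper bounds collapses to $\prod_i G_{ii}$, yielding the inequality. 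Because every $G_{ij}>0$, a single strict Cauchy--Schwarz factor makes the whole product strict, and strictness of each factor is exactly non-proportionality of $f_{x_i}$ and $f_{x_{i+1}}$, guaranteed by stochastic relevance since the $x_i$ are distinct.

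For the first inequality I would run the same argument with the constant function $1$ playing the role of the second vector: Cauchy--Schwarz for $f_x$ and $1$ gives $\langle f_x,f_x\rangle\,\langle 1,1\rangle\ge\langle f_x,1\rangle^2$, and since $\langle 1,1\rangle=1$ this is $\int f_x^2\,d\mu\ge(\int f_x\,d\mu)^2$, equivalently $\Pr(x\mid x)\ge\Pr(x)$ after dividing by $\Pr(x)$, i.e.\ $\mathrm{Var}_\mu(f_x)\ge 0$.

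The step I expect to be the main obstacle is the \emph{strictness} of this first inequality. Unlike the off-diagonal cases, where strictness reduces immediately to pairwise non-proportionality of two distinct $f$'s, here strictness demands that $f_x$ be $\mu$-non-constant, i.e.\ that the value $x$ genuinely carry information about $\omega$; the constant function $1$ is not one of the $f_y$, so the proportionality route through stochastic relevance does not apply to it verbatim. I would therefore isolate a dedicated argument that full mixing together with stochastic relevance precludes any $f_x$ from being $\mu$-a.e.\ constant, intuitively because a signal value whose likelihood did not vary with $\omega$ would be uninformative; this non-degeneracy, rather than the Cauchy--Schwarz computation, is the crux. Once it is in hand, all three statements follow uniformly from the strict Gram-matrix estimate above.
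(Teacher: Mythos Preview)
Your approach is essentially identical to the paper's: rewrite the posteriors via conditional independence as ratios $\langle f_x,f_y\rangle/\langle f_x,1\rangle$ with $f_x(\omega)=\Pr(x\mid\omega)$, then use Jensen (equivalently, Cauchy--Schwarz against the constant $1$) for the first claim and pairwise Cauchy--Schwarz multiplied around the cycle for the second and third, with strictness supplied by non-proportionality of the $f_{x_i}$ under stochastic relevance. The strictness concern you single out for $\Pr(x\mid x)>\Pr(x)$ is precisely the point the paper addresses by asserting that stochastic relevance makes each $\Pr(x\mid\omega)$ non-constant; the paper offers no argument beyond that assertion, so your plan to isolate and justify this step is, if anything, more careful than the original.
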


\subsection{Mechanisms with Limited Knowledge}\label{sec_bound_knowledge}

The second property is that the truthful elicitation of all private signals is not possible if a mechanism has no knowledge about agents' belief structure. 
This follows from Theorem 1 presented in \cite{RF:13}, which states that it is not possible to design a minimal payment scheme that truthfully elicits
private signals of all agents. While the result was obtained for a slightly different information elicitation scenario, where no 
particular belief model is assumed, it is easy to verify that it holds in our setting as well (the proof does not use anything 
contradictory to our setting). We explicitly state the impossibility of truthful information elicitation due to its importance for the further analysis. 

\begin{theorem} \label{thm_impossilibity} \cite{RF:13}
There exists no payment function $\tau$ that is BNIC for every belief model that complies with the setting.
\end{theorem}

Even if a mechanism does have some information about agents, the result of Theorem \ref{thm_impossilibity} 
is likely to hold if this knowledge is limited. We, therefore, define knowledge-bounded
mechanisms as mechanisms whose information about agents is not enough to construct a BNIC 
payments function for all admissible belief models. 

\begin{definition}\label{def_lk}
An information structure is a {\em limited knowledge} if one cannot construct a payment function 
$\tau$ that is BNIC for every belief model that complies with the setting.
\end{definition}

The AdaPTS mechanism assumes that a given granularity structure of probability 
simplex $\mathcal P$ contains a representative distribution $P$ for which the PTS 
payment rule is BNIC. We show in the following subsections that one can always partition 
probability simplex $\mathcal P$ to obtain a desirable granularity structure. Moreover, the following lemma
shows that the granularity structure of AdaPTS is not in general sufficient to construct a BNIC payment rule.

\begin{lemma}\label{lm_knowlege_pts}
The information structure that AdaPTS has about agents' beliefs is allowed to be a limited knowledge. 
\end{lemma}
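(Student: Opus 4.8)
The plan is to apply Definition~\ref{def_lk} directly: I must show that the family of belief models consistent with the information AdaPTS actually uses admits no single payment function $\tau$ that is strictly BNIC for all of them. The only thing AdaPTS assumes about the agents is a lower bound, call it $\delta$, on the size (measure) of the region of fully mixed distributions $P$ for which the PTS rule \eqref{eq_pts} is BNIC --- this is exactly what is needed to fix a granularity $\gamma$ guaranteeing that at least one grid representative lands in the truthfulness phase. Hence the set of compliant models $S_\delta$ consists of all belief models obeying the formal setting (common, fully mixed, stochastically relevant signals) whose PTS-truthfulness region has measure at least $\delta$, and it suffices to prove that $S_\delta$ admits no universal strictly-BNIC payment function.

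First I would make the BNIC requirement explicit. Writing the expected payoff of an agent with signal $x$ who reports $y$ against honest peers (as in the paragraph following \eqref{eq_pts}), strict truthfulness of a model $M \in S_\delta$ means that for every $x$ and every $y \ne x$,
\begin{align*}
\sum_z \Pr\nolimits_M(z|x)\,\bigl[\tau(x,z)-\tau(y,z)\bigr] > 0 .
\end{align*}
A universal $\tau$ would have to satisfy this linear system simultaneously for every $M \in S_\delta$, so the task reduces to exhibiting compliant models whose systems are jointly infeasible.

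Next I would realize the obstruction of Theorem~\ref{thm_impossilibity} inside $S_\delta$. That theorem already guarantees that over the full class of compliant models the BNIC systems are jointly infeasible; what remains is to produce a witnessing pair of models whose truthfulness regions each have measure at least $\delta$, so that the witnesses stay inside $S_\delta$. The binary case $\{0,1\}$ gives the cleanest picture: I would choose two models $M_1,M_2$ whose PTS-truthfulness intervals for $P(0)$ are disjoint --- one below and one above $\tfrac12$ --- each of length at least $\delta$. Lemma~\ref{lm_correlation}, via $\Pr(x|x)>\Pr(x)$, ensures these intervals are nondegenerate, so such models genuinely lie in $S_\delta$. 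For PTS-form payments the incompatibility is immediate, since a single fixed $P(0)$ cannot lie in two disjoint intervals; combined with the PTS uniqueness result of \cite{FW:16} and with the infeasibility furnished by Theorem~\ref{thm_impossilibity} for a general payment function, no single $\tau$ can be strictly BNIC for both $M_1$ and $M_2$. By Definition~\ref{def_lk} the information structure of AdaPTS is then allowed to be a limited knowledge, which proves the lemma.

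The main obstacle is the joint requirement in the construction step: the two witness models must simultaneously keep their truthfulness regions large (measure $\ge \delta$, so both remain compliant with AdaPTS's assumption) and keep those regions --- and the underlying posteriors --- separated enough that their BNIC constraints genuinely contradict one another. Balancing these opposing pressures while checking that the candidate posteriors still satisfy full mixing and stochastic relevance is the delicate part; the remaining linear-inequality bookkeeping is already implicit in Theorem~\ref{thm_impossilibity}.
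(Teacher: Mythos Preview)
Your high-level strategy coincides with the paper's: exhibit two belief models, both consistent with what AdaPTS actually knows (only a lower bound on the size of the PTS-truthfulness region), for which no single payment function $\tau$ is strictly BNIC. Where you diverge is in the execution of the decisive step, and there the proposal has a real gap.

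The paper carries out the construction concretely and exploits a symmetry you do not use. It fixes two signal values $x,y$, picks any $P_1$, and sets $P_2$ to be $P_1$ with the $x$ and $y$ coordinates swapped. It then builds posteriors directly from $P_1$ via the one-parameter family $\Pr_1(z\mid z)=P_1(z)+(1-P_1(z))\beta$, $\Pr_1(w\mid z)=P_1(w)(1-\beta)$ for $w\neq z$, and defines $\Pr_2$ analogously from $P_2$; by construction $\Pr_2$ is just the $x\leftrightarrow y$ relabeling of $\Pr_1$, so the two models automatically have PTS-truthfulness regions of identical size. This symmetry is precisely what dissolves the ``main obstacle'' you flag at the end. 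Choosing $P_1(x)>P_1(y)$ and a suitable $\beta$ gives $\Pr_1(x\mid x)>\Pr_1(y\mid x)$ but $\Pr_1(y\mid y)<\Pr_1(x\mid y)$; then, rerunning the linear-inequality argument behind Theorem~\ref{thm_impossilibity}, BNIC for \emph{both} models forces $(\Pr_1(x\mid x)-\Pr_1(y\mid x))(\Delta(x)-\Delta(y))>0$ and $(\Pr_1(y\mid y)-\Pr_1(x\mid y))(\Delta(x)-\Delta(y))>0$ with $\Delta(x)=\tau(x,x)-\tau(x,y)$, $\Delta(y)=\tau(y,x)-\tau(y,y)$, which is a contradiction. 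No black-box appeal to \cite{FW:16} or to Theorem~\ref{thm_impossilibity} is needed.

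Your argument, by contrast, disposes of PTS-form payments correctly but then tries to reach general $\tau$ by ``combining'' the uniqueness statement of \cite{FW:16} with Theorem~\ref{thm_impossilibity}. Neither piece does the work you need. The uniqueness result concerns a single belief model under hypotheses you never verify for your $M_1,M_2$, and even then it pins $\tau$ down only up to an equivalence you do not control. Theorem~\ref{thm_impossilibity} tells you the \emph{entire} class of compliant models is jointly infeasible, not that your particular pair is; to use it you would have to check that its witnesses already lie in $S_\delta$, which is exactly what the paper accomplishes by rebuilding those witnesses with the swap symmetry so that their truthfulness regions coincide in size. As written, you have identified the right target but not supplied the construction or the contradiction that hits it.
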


\subsection{Dishonesty Limit}

While one cannot achieve incentive compatibility using peer prediction with limited knowledge, 
dishonest responses could be potentially useful for a mechanism to learn something about agents. 
This was noted in \cite{JF:08}, where the mechanism outputs a publicly available statistic that converges to the desirable outcome - true distribution of private 
signals. The drawback of the mechanism is that it relies on agents being capable of learning from each other's responses, meaning that
they update their beliefs by analyzing the changes in the public statistic.

Our approach is different. By inspecting agents' responses, we aim to learn which incentives are suitable to make agents respond 
truthfully.  To quantify what can be done with such an approach, we define \textit{dishonesty limit}.

\begin{definition}\label{def_PartialTruth}
{\em Dishonesty limit} (DL) is the minimal expected number of dishonest reports 
in any Bayesian-Nash equilibrium of any minimal mechanism with a limited knowledge. More precisely:
\begin{align*}
DL = \min_{\mathcal M} \min_{\sigma} d(\mathcal M, n, \sigma) 
\end{align*}
where $d(\mathcal M, n, \sigma)$ is the expected number of dishonest agents in an equilibrium strategy profile $\sigma$ 
of mechanism $\mathcal M$ with $n$ agents. 
\end{definition}

The following lemma establishes the order of the lower bound of DL. Its proof is based on the result of \cite{LR:85}, while the tightness of the result 
is shown in Section \textit{Main Results}. 

\begin{lemma}\label{lm_PartialTruth_lb}
The dishonesty limit is lower bounded by:
\begin{align*}
DL \ge \Omega(\log (n))  
 \end{align*}
\end{lemma}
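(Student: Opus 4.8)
The plan is to reduce the dishonesty‑limit question to a lower bound on cumulative regret in a stochastic multi‑armed bandit problem, and then invoke the classical result of Lai and Robbins. The key conceptual bridge, already set up in the paper, is that maximizing the number of honest agents is equivalent to maximizing the disagreement indicator \eqref{eq_disagree_fun}: by the phase‑transition argument, the truthful phase strictly maximizes expected disagreement, so each representative distribution $P$ that the mechanism selects plays the role of an \emph{arm}, the expected disagreement $\mathbbm E[I(Y_1,\dots,Y_k)]$ under $P$ plays the role of that arm's \emph{mean reward}, and an arm lying in the truthful phase is the unique optimal arm. A report is dishonest precisely when the current $P$ is in phase~1 or phase~3 (modulo the fixed $\epsilon$ fraction of noise, which is unbiased and hence contributes the same additive term to every arm and does not affect which arm is optimal).

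First I would formalize this correspondence. Fix any minimal limited‑knowledge mechanism $\mathcal M$ and any equilibrium strategy profile $\sigma$. By Lemma~\ref{lm_knowlege_pts} the mechanism cannot know which $P$ induces truthful reporting, so at each time step $t$ it effectively commits to some arm (payment rule) before observing the group's reports; the number of dishonest reports accrued at step $t$ is positive whenever the selected arm is suboptimal, and the per‑step dishonesty is bounded below by a constant gap whenever a suboptimal (non‑truthful‑phase) arm is pulled, because the suboptimal phases produce a strictly bounded‑away‑from‑optimal expected fraction of dishonest reports. Thus the expected total number of dishonest reports over the $n/k$ groups is $\Theta(1)$ times the expected number of suboptimal pulls, i.e.\ proportional to the bandit \emph{regret} of whatever arm‑selection rule $\mathcal M$ is implicitly using.

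Next I would apply the Lai–Robbins lower bound \cite{LR:85}: for any consistent (uniformly good) allocation rule on a stochastic bandit with at least two arms whose reward distributions have distinct means and finite, positive Kullback–Leibler divergence, the expected number of pulls of each suboptimal arm grows at least logarithmically in the horizon, so the cumulative regret is $\Omega(\log T)$ where $T$ is the number of rounds. Here the horizon is the number of groups, $T = n/k$ with $k$ a constant, so $\log T = \Theta(\log n)$. Translating back through the constant‑factor regret‑to‑dishonesty correspondence established above gives $d(\mathcal M, n, \sigma) \ge \Omega(\log n)$, and since this holds for every $\mathcal M$ and every $\sigma$, taking the minimum in Definition~\ref{def_PartialTruth} preserves the bound, yielding $DL \ge \Omega(\log n)$.

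The main obstacle I anticipate is the \emph{consistency} hypothesis that Lai–Robbins requires: the bound applies only to allocation rules that are uniformly good across the family of problem instances, and one must argue that any knowledge‑bounded mechanism that does nontrivially well (i.e.\ does not suffer linear dishonesty) must be consistent in this sense, so that a mechanism cannot escape the bound by fixing a single arm for all time. This is exactly where Definition~\ref{def_lk} and Lemma~\ref{lm_knowlege_pts} do the work: because the truthful arm is genuinely unknown and varies across admissible belief models, a rule that is not uniformly good on this family already incurs linear dishonesty on some model, so either way the $\Omega(\log n)$ bound (or worse) holds. A secondary technical point to handle carefully is confirming that the disagreement reward distributions across distinct phases have distinct means and finite KL divergence — which follows from Lemma~\ref{lm_correlation} together with full mixedness and stochastic relevance, since these guarantee the phases are strictly separated and all report distributions have full support.
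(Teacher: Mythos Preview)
Your approach is essentially the paper's: cast the mechanism's choice of payment rule as a stochastic multi-armed bandit and invoke the Lai--Robbins $\Omega(\log N)$ lower bound on suboptimal pulls, then convert pulls to dishonest reports via the constant group size $k$ and the strictly positive probability of each signal value. Your treatment of the consistency hypothesis (uniformly good rules) and the KL separation of the arm distributions is in fact more careful than the paper, which simply asserts the MAB lower bound without qualification.

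One presentational point to tighten: your opening paragraph anchors the reduction in the disagreement indicator \eqref{eq_disagree_fun}, the phase diagram, and the representative distributions $P$. Those are constructs of AdaPTS and belong to the \emph{upper}-bound argument; for a lower bound over \emph{all} minimal limited-knowledge mechanisms they are not available, since an arbitrary $\mathcal M$ need not use PTS-style payments or have a phase structure at all. The paper avoids this by framing the arms abstractly as payment functions $\tau_1,\tau_2$ that are each BNIC for some admissible belief model but not both, with the stochastic feedback being simply the $k$ reports. Your second and fourth paragraphs already contain this correct general argument, so you should drop or rewrite the first paragraph to match.
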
 
\begin{proof}
By Definition \ref{def_lk}, there exists no single mechanism that
incentivizes agents to report honestly if their belief model is arbitrary. Suppose now that 
we have two mechanisms $\tau_1$ and $\tau_2$ that are BNIC under two different 
(complementary) belief models, so that a particular group of agents is
truthful only for one mechanism. We can consider this situation from the perspective 
of a meta-mechanism $\mathcal M$ that has to choose between $\tau_1$ and $\tau_2$.
At a time-step $t$, $\mathcal M$ obtains $k$ reports --- feedback, which, in general, is 
insufficient for determining whether agents lied or not because agents' observations 
are stochastic, while their reports contain noise. Therefore, the problem of choosing 
between $\tau_1$ and $\tau_2$ is an instantiation of a multi-armed bandit problem
(see Section \textit{Mechanism: Adaptive PTS (AdaPTS)}). 
Since, in general, any MAB algorithm pulls suboptimal 
arms $\Omega(\log (N))$ number of times in expectation where $N$ is the total number of pulls (e.g., see \cite{LR:85}), 
we know that meta mechanism $\mathcal M$ will in expectation choose the wrong (untruthful) payments at least $O(\log n)$ times. 
This produces $O(\log n)$ untruthful reports in expectation because non-truthful payments are not truthful for at least one signal value 
and each signal value has strictly positive probability of being endorsed by an agent. 
\end{proof}

\subsection{Existence of an Informative Distribution}

The PTS mechanism is BNIC if the associated distribution 
$P$ satisfies the self-predicting condition, i.e., if $P$ is informative. 
We now turn to the crucial property of our setting: 

\begin{proposition}\label{prop_truth_exists}
There exists a region $\mathcal R \subset \mathcal P$ in probability simplex $\mathcal P$, such 
that PTS is BNIC for any $P \in \mathcal R$. 
\end{proposition}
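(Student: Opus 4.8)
The plan is to reduce the claim to the existence of a single fully mixed distribution $P$ that satisfies the self-predicting condition \eqref{eq_self_prediction} for every pair of signal values, and then to extract such a $P$ from the general cyclic inequality of Lemma \ref{lm_correlation}. First I would recall why one informative $P$ suffices: provided the rational peers report honestly, the noise-corrected expected payoff of a rational agent with signal $x$ who reports $y$ is, in expectation over $\epsilon$ (using $\hat\epsilon = \mathbb{E}(\epsilon)$ to cancel the uniform-noise term through the choice of $d$), proportional to $\frac{\Pr(y|x)}{P(y)}$. Hence honest reporting is a strict best response precisely when $\frac{\Pr(x|x)}{P(x)} > \frac{\Pr(y|x)}{P(y)}$ for all $y \neq x$, so if some $P$ in the interior of $\mathcal P$ satisfies \eqref{eq_self_prediction} for all pairs, then PTS with that $P$ is strictly BNIC.

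Next I would rewrite the target inequalities multiplicatively as $\frac{P(x)}{P(y)} < \frac{\Pr(x|x)}{\Pr(y|x)}$ for every ordered pair $x \neq y$ and pass to logarithms. Setting $u_x = \log P(x)$ and $w_{xy} = \log \frac{\Pr(x|x)}{\Pr(y|x)}$, the requirement becomes a system of difference constraints $u_x - u_y < w_{xy}$, one per ordered pair of distinct signals. A standard fact is that such a system is feasible if and only if along every directed cycle $x_1 \to x_2 \to \cdots \to x_{m'} \to x_1$ the weights sum to a strictly positive number: summing the constraints around the cycle makes the $u$'s telescope to $0$, which gives necessity, and a shortest-path (Farkas-type) potential argument on the finite complete digraph gives sufficiency, since all cycle sums being strictly positive lets one slacken to a non-strict system with no negative cycle.

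The crucial observation is that this cyclic condition is exactly Lemma \ref{lm_correlation}. Indeed $\sum_i w_{x_i x_{i+1}} = \log \frac{\Pr(x_1|x_1)\cdots\Pr(x_{m'}|x_{m'})}{\Pr(x_2|x_1)\cdots\Pr(x_1|x_{m'})}$, and the general inequality of Lemma \ref{lm_correlation} asserts that the argument of this logarithm exceeds $1$ for every tuple of distinct signals with $2 \le m' \le m$, i.e. for every simple directed cycle. Thus every cycle sum is strictly positive, the system has a solution $(u_x)$, and $P(x) = e^{u_x}/\sum_z e^{u_z}$ is a fully mixed distribution whose ratios $P(x)/P(y) = e^{u_x - u_y}$ still satisfy the strict inequalities; hence $P$ is informative. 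Finally, because all defining inequalities are strict and each map $P \mapsto \frac{\Pr(x|x)}{P(x)} - \frac{\Pr(y|x)}{P(y)}$ is continuous on the interior of $\mathcal P$, there is an open neighborhood $\mathcal R$ of $P$ on which all of them persist, giving the desired region.

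The main obstacle is matching the feasibility criterion to Lemma \ref{lm_correlation} with complete precision: one must verify that the only obstructions to feasibility are simple directed cycles, that each such cycle of every length $2 \le m' \le m$ corresponds to exactly one instance of the general cyclic product, and that the noise correction is what makes the expected payoff cleanly proportional to $\frac{\Pr(y|x)}{P(y)}$, so that strictness of the self-predicting condition genuinely yields a \emph{strict} equilibrium rather than merely a weak one.
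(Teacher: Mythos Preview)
Your argument is correct and reaches the same conclusion, but the route differs from the paper's. The paper also decomposes the proposition into two steps: (i) once an informative $P$ exists, continuity yields an open neighbourhood of informative distributions (your final paragraph is essentially this step); and (ii) an informative $P$ exists. For (ii), however, the paper does not pass through difference constraints and cycle positivity. Instead it fixes an anchor value $a$, writes down an explicit candidate
\[
P(a)=\tfrac{1}{\alpha},\qquad P(x)=\tfrac{1}{\alpha}\cdot\min_{y}\frac{\Pr(x|x)}{\Pr(y|x)}\cdot\frac{\Pr(y|y)}{\Pr(a|y)}\quad(x\neq a),
\]
and then verifies the three families of inequalities $\frac{\Pr(a|a)}{P(a)}>\frac{\Pr(x|a)}{P(x)}$, $\frac{\Pr(x|x)}{P(x)}\geq\frac{\Pr(a|x)}{P(a)}$, and $\frac{\Pr(x|x)}{P(x)}>\frac{\Pr(z|x)}{P(z)}$ by plugging in and invoking Lemma~\ref{lm_correlation} case by case, with a final small perturbation to make the middle inequality strict.

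Your reformulation is arguably cleaner: it makes transparent \emph{why} the general cyclic inequality of Lemma~\ref{lm_correlation} is exactly the right hypothesis, since the feasibility criterion for the system $u_x-u_y<w_{xy}$ is precisely positivity of every simple directed cycle sum, and that sum is the logarithm of the cyclic product appearing in the lemma. The paper's approach, by contrast, gives a concrete $P$ one could in principle compute or bound, at the cost of a less structural verification. Both proofs use Lemma~\ref{lm_correlation} as the essential input and both use continuity to pass from a single $P$ to a region $\mathcal R$; the difference is purely in how existence of the single $P$ is extracted.
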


\subsection{Indicator Function}

Now, let us define a set of lying strategies that have the same structure and in which agents report only a 
strict subset of all possible reported values. For example, if possible values are $\{0, 1, 2, 3\}$ and agents 
are not incentivized to report honestly value $3$, then a possible lying strategy
could be to report honestly values $0$, $1$, $2$, and instead of honestly reporting 
$3$, agents could report $2$.

\begin{definition}
Consider a non-surjective function $\rho : \{0, ..., m-1\} \rightarrow \{0, ..., m-1\}$. 
A reporting strategy profile is {\em non-surjective} if a report of a {\em rational} agent with private signal $X$
is $Y = \rho(X)$.   
\end{definition}
 
Non-surjective strategies also include those that most naturally follow from a simple best response reasoning: 
agents who are not incentivized to report honestly, deviate by misreporting, which necessarily reduces 
the set of values that rational agents report. This type of agents' reasoning 
basically corresponds to the {\em player inference} process explained 
in \cite{DBLP:conf/hcomp/WaggonerC14}.
Without specifying how agents form their reporting strategy, we show 
that in PTS there always exists an equilibrium non-surjective strategy profile. 
In Subsection {\em Allowing Smoother Transitions Between Phases}, we discuss how 
to use our approach when agents are not perfectly synchronized in adopting non-surjective strategies.

\begin{proposition}\label{prop_exist_non_surjective}
For any fully mixed distribution $P$, there exists a non-surjective strategy profile that is a strict equilibrium of the PTS mechanism. 
\end{proposition}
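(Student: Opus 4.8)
The plan is to reduce the equilibrium condition to a single best-response comparison and then exhibit an explicit non-surjective witness. First I would write down the expected reward of a rational agent with signal $x$ who reports $y$ while every other rational agent follows a non-surjective profile $\rho$. Using the noise model $\Pr(Y_j = y \mid X_i = x) = (1-\epsilon)\Pr(\bar X_j = y \mid X_i = x) + \frac{\epsilon}{m}$ together with $\bar X_j = \rho(X_j)$, we have $\Pr(\bar X_j = y \mid X_i = x) = \sum_{z \in \rho^{-1}(y)} \Pr(z \mid x)$. Substituting this into the PTS payment with the skew-correction $d = -c\frac{\hat\epsilon}{m\,P(y)}$ and using $\hat\epsilon = \mathbb{E}(\epsilon) = \epsilon$, the $\frac{\epsilon}{m}$ term cancels and the expected reward becomes
\[
(1-\epsilon)\,\frac{\sum_{z \in \rho^{-1}(y)} \Pr(z \mid x)}{P(y)}.
\]
Since $1-\epsilon>0$ is a constant, $\rho$ is a \emph{strict} equilibrium precisely when, for every signal value $x$, $\rho(x)$ is the unique maximizer over $y$ of $g(x,y) := \frac{\sum_{z \in \rho^{-1}(y)} \Pr(z \mid x)}{P(y)}$.

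Next I would produce a witness by collapsing all signals onto one report. Fix any $c_0 \in \{0,\dots,m-1\}$ and set $\rho(x) = c_0$ for all $x$; this map is non-surjective because $m \ge 2$. Under this profile a rational peer always reports $c_0$, so $\Pr(\bar X_j = y \mid X_i = x) = \mathbb{1}_{y = c_0}$, \emph{independently of the agent's own signal}. The objective $g(x,y)$ then reduces to $\frac{\mathbb{1}_{y=c_0}}{P(y)}$, which equals $\frac{1}{P(c_0)}>0$ at $y=c_0$ and $0$ at every other report. Because $P$ is fully mixed ($P(c_0)>0$) and $\epsilon<1$, reporting $c_0$ is therefore the strict best response for every signal value, so the constant profile is a strict Bayesian-Nash equilibrium that is non-surjective, which proves the claim.

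Two points deserve care rather than heavy computation. The first is the cancellation of the noise term, which is exactly what the correction $d = -c\frac{\hat\epsilon}{m\,P(y)}$ with $\hat\epsilon=\mathbb{E}(\epsilon)$ buys us; without it a residual $\frac{\epsilon/m - \hat\epsilon/m}{P(y)}$ would have to be controlled. The second is ruling out reports $y \notin \mathrm{range}(\rho)$: such reports give $g(x,y)=0$ and so are never weak best responses, which is what keeps the equilibrium strict. I expect the main (and only mild) obstacle to be bookkeeping around these two facts, since the existence itself is immediate once the objective is written in the proportional form above. As a sanity check, this construction is exactly what the phase-transition discussion predicts: Phase 1 (all reports equal to $x$) and Phase 3 (all reports equal to $y$) are precisely the constant-strategy equilibria, and more refined non-constant non-surjective equilibria can be obtained by iterating best response from honest reporting whenever some value is not incentivized (the player-inference collapse).
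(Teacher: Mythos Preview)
Your argument is correct and mirrors the paper's own proof: both exhibit the constant profile $\rho\equiv c_0$ as a non-surjective strict equilibrium by showing that, after the $\hat\epsilon$-correction cancels the noise term, reporting $c_0$ yields a strictly positive expected payoff while any other report yields zero. The only cosmetic difference is that the paper fixes $c_0=\arg\min_z P(z)$, whereas you (correctly) observe that any $c_0$ works once the cancellation is in place.
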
 

Notice that even for non-surjective strategy profiles, the set of reported values that a mechanism receives 
does not reduce, i.e., it is equal to $\{ 0, ..., m-1\}$. 
This follows from the fact that not all agents are rational in a sense that they comply with 
a prescribed strategy profile (i.e., some report random values instead). 
Nevertheless, the statistical nature of reported values change: reports 
received by the mechanisms have smaller number of disagreements.

\begin{lemma}\label{lm_disagree}
The expected value  $\mathbbm E(I(Y_1, ..., Y_k))$ of the indicator function defined by \eqref{eq_disagree_fun} is
strictly greater for truthful reporting than for any non-surjective strategy profile. 
\end{lemma}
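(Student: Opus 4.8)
The plan is to reduce the lemma to a pointwise (in $\omega$) inequality between sums of squared report probabilities, and then to settle that inequality with a superadditivity argument.

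First I would exploit symmetry. Since every rational agent uses the same reporting rule (the identity for truthful reporting, or the common non-surjective map $\rho$) and the noise is i.i.d.\ across agents, all pairs $(Y_i, Y_j)$ with $i \ne j$ are identically distributed. By linearity of expectation applied to \eqref{eq_disagree_fun}, $\mathbbm E(I(Y_1,\dots,Y_k)) = \Pr(Y_i \ne Y_j) = 1 - \Pr(Y_i = Y_j)$ for any fixed pair. Conditioning on the common state $\omega$ renders the two reports independent, so I would write $\Pr(Y_i = Y_j) = \mathbbm E_\omega\big[\sum_{y} q(y\mid\omega)^2\big]$, where $q(y\mid\omega) = \Pr(Y = y \mid \omega)$ is the report distribution induced by the strategy in question.

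Next I would make $q(y\mid\omega)$ explicit, incorporating the noise model $\Pr(Y = y \mid X = x) = (1-\epsilon)\mathbbm 1_{y = \rho(x)} + \epsilon/m$. Writing $P_x := \Pr(X = x\mid\omega)$, truthful reporting gives $q_{\mathrm{id}}(y\mid\omega) = (1-\epsilon)P_y + \epsilon/m$, while a non-surjective $\rho$ with preimage blocks $A_y := \{x : \rho(x) = y\}$ gives $q_\rho(y\mid\omega) = (1-\epsilon)P_{A_y} + \epsilon/m$, where $P_{A_y} := \sum_{x \in A_y} P_x$. Because $\{A_y\}_y$ partitions $\{0,\dots,m-1\}$, both $\sum_y P_y = 1$ and $\sum_y P_{A_y} = 1$. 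Expanding the squares, the linear and constant terms coincide for the two strategies and cancel, leaving for every $\omega$
\begin{align*}
\sum_y q_\rho(y\mid\omega)^2 - \sum_y q_{\mathrm{id}}(y\mid\omega)^2 = (1-\epsilon)^2\Big(\sum_y P_{A_y}^2 - \sum_x P_x^2\Big).
\end{align*}

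The heart of the argument, and the step I expect to require the most care to state cleanly, is the strict inequality $\sum_y P_{A_y}^2 > \sum_x P_x^2$. This follows from superadditivity of squaring on the nonnegatives: for each block, $P_{A_y}^2 = \big(\sum_{x \in A_y} P_x\big)^2 \ge \sum_{x \in A_y} P_x^2$, with equality only if the block contains at most one signal of positive probability; summing over $y$ yields $\ge$, and the inequality becomes strict as soon as some block contains two signals of positive probability. Non-surjectivity of $\rho$ forces at least one preimage block to have two or more elements, and full mixedness ($P_x > 0$ for all $x$ and all $\omega$) guarantees these are positive-probability signals, so the inequality is strict for every $\omega$. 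Since $1-\epsilon > 0$ and $p(\omega) > 0$ throughout $\Omega$, taking the expectation over $\omega$ preserves strictness, so $\Pr(Y_i = Y_j)$ is strictly larger under $\rho$ than under truthful reporting. Equivalently $\mathbbm E(I)$ is strictly larger for truthful reporting, which is exactly the claim.
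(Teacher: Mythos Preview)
Your argument is correct and, at its core, coincides with the paper's: both reduce $\mathbbm E(I)$ to a pairwise agreement probability and then show that collapsing two or more signals into one block strictly raises that probability. The paper works directly with the joint $\Pr(X_i,X_j)$, writing the rational-agent agreement as $\sum_x \Pr(\{z:\rho(z)=\rho(x)\},x)$ and comparing it to $\sum_x \Pr(x,x)$; you instead condition on $\omega$ first, use conditional independence to turn agreement into $\sum_y q(y\mid\omega)^2$, and invoke superadditivity of squares. These are the same inequality seen through the identity $\Pr(X_i=x,X_j=y)=\int P_x P_y\, p(\omega)\,d\omega$. Two things your route buys: the noise is handled explicitly and transparently (the paper's $\tfrac{1}{1-\epsilon}(\,\cdot\,-\epsilon/m)$ reduction is informal and not quite the right normalization, even though the comparison it justifies is valid), and you obtain the stronger pointwise-in-$\omega$ conclusion, which makes strictness after integrating immediate. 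The paper's version is shorter because it skips the $\omega$-conditioning, at the cost of being less careful about the noise terms.
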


\subsection{Main Results}\label{sec_main}
 
From Proposition \ref{prop_truth_exists}, Proposition \ref{prop_exist_non_surjective}, and Lemma \ref{lm_disagree}, we obtain the main property of
the AdaPTS mechanism: its ability to substantially bound the number of dishonest reports in the system. 

\begin{theorem}\label{thm_num_pts_dishonest}
There exists a strict equilibrium strategy profile of the AdaPTS mechanism such that the expected number of non-truthful reports 
is of the order of $O(\log n)$.  
\end{theorem}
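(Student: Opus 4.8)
The plan is to combine the three structural results established earlier (Proposition~\ref{prop_truth_exists}, Proposition~\ref{prop_exist_non_surjective}, and Lemma~\ref{lm_disagree}) with the regret guarantee of the UCB1 algorithm that drives AdaPTS. By Proposition~\ref{prop_truth_exists}, the discretization of $\mathcal P$ can be chosen fine enough (i.e., with appropriate granularity $\gamma$) so that at least one representative distribution $P^* \in \mathcal R$ makes the PTS payment rule BNIC; for that arm, honest reporting is a strict equilibrium. For every other representative $P$, Proposition~\ref{prop_exist_non_surjective} guarantees the existence of a strict equilibrium non-surjective strategy profile. The equilibrium strategy profile of AdaPTS that I would exhibit is thus the one in which, at each time step, the group of $k$ agents plays honestly if the selected arm is $P^*$ and plays the corresponding non-surjective equilibrium profile otherwise. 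The first task is to argue this is indeed a (strict) equilibrium of the whole mechanism: since payments at each time step depend only on the reports within that group's time step (the arm $P$ chosen by UCB1 is fixed before the group reports), each group faces a single PTS instance and has no incentive to deviate from the prescribed per-arm equilibrium.

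The second and central task is to translate the regret bound of UCB1 into a bound on the number of dishonest reports. Here Lemma~\ref{lm_disagree} is the linchpin: it tells us that the expected disagreement indicator $\mathbbm E(I)$, which is exactly the feedback fed to UCB1, is \emph{strictly} maximized at the truthful phase, hence strictly larger for $P^*$ than for any arm on which agents play a non-surjective profile. This means the truthful arm $P^*$ is the unique optimal arm of the induced MAB problem, and every non-truthful arm is separated from it by a strictly positive suboptimality gap $\Delta > 0$. By the standard UCB1 analysis \cite{A:02}, the expected number of times any suboptimal arm is pulled over $T$ rounds is $O(\log T)$. Since $T$ scales linearly with $n$ (there are $n/k$ groups, with $k$ a fixed constant), the expected number of time steps on which a non-truthful arm is selected is $O(\log n)$.

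Finally I would convert "non-truthful time steps" into "non-truthful reports." On each such time step, at most $k$ agents report, and the non-surjective equilibrium is untruthful for at least one signal value, each of which occurs with strictly positive probability; so the expected number of dishonest reports per untruthful step is bounded by a constant (at most $k$). Multiplying the $O(\log n)$ untruthful steps by this constant yields $O(\log n)$ dishonest reports in expectation, completing the proof.

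The main obstacle I anticipate is verifying that the per-group equilibria genuinely compose into a global equilibrium of AdaPTS. One must check that the adaptive, history-dependent arm selection of UCB1 does not give an agent an incentive to deviate in order to manipulate which arm is chosen for \emph{future} groups. This is resolved by noting that agents are grouped by arrival and each agent participates in only one time step, so an agent's report cannot affect the payments she herself receives through the arm-selection rule (that arm is already fixed when she reports), and she has no stake in future groups' payments; thus the intertemporal coupling introduced by the bandit algorithm does not distort the within-step incentives. A secondary subtlety is ensuring the estimate $\hat\epsilon$ used in the payment shift does not destroy strictness of the equilibria, but since $\hat\epsilon$ is unbiased ($\mathbbm E(\hat\epsilon) = \epsilon$) the expected payoffs remain proportional to $\frac{\Pr(y|x)}{P(y)}$ and the strict ranking of best responses is preserved.
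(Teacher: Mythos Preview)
Your proposal is correct and follows essentially the same approach as the paper's proof: exhibit the per-arm equilibrium profile (truthful on the informative arm, non-surjective elsewhere) via Propositions~\ref{prop_truth_exists} and~\ref{prop_exist_non_surjective}, use Lemma~\ref{lm_disagree} to identify the truthful arm as optimal for the UCB1 feedback, and then invoke the $O(\log T)$ suboptimal-pull bound with $T=n/k$. Your write-up is in fact more careful than the paper's, explicitly arguing why the per-group equilibria compose into a global one (each agent participates in a single time step and the arm is fixed before she reports) and why the unbiased $\hat\epsilon$ preserves strictness---points the paper leaves implicit.
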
 
\begin{proof}
Consider a reporting strategy in which agents are honest whenever $P$ is such that truthful reporting is a strict Bayesian-Nash equilibrium of PTS 
(by Proposition \ref{prop_truth_exists}, such $P$ always exists),
and otherwise they use an equilibrium non-surjective strategy profile (which by Proposition \ref{prop_exist_non_surjective} always exists). 
We use the result that the UCB1 algorithm is expected to pull a suboptimal arm $\log(N)$ times, where $N$ is the total number of pulls \cite{A:02}. 
By Lemma \ref{lm_disagree}, the representative $P$ of a truthful reporting region is an optimal arm, while the representative $P$ of 
a non-truthful region is a suboptimal arm. 
Furthermore, the number of pulls in our case corresponds to $n/k$, where $n$ is the total number of agents and $k$ is
the number of agents at time period $t$. 
Since $k << n$ is a fixed parameter, the expected number of lying agents is of the order of $O(\log(n))$.  
\end{proof}

Notice that we have not specified an exact equilibrium strategy that satisfies the bound of the theorem.
For the theorem to hold, it suffices that agents adopt truthful reporting when $P$ in AdaPTS is such that PTS is BNIC,
while they adopt any non-surjective strategy profile when $P$ is such that PTS is not BNIC. As explained in the previous 
section, a simple best response reasoning can lead to such an outcome. 

Since AdaPTS is allowed to have a bounded knowledge information structure (Lemma \ref{lm_knowlege_pts}), 
from Theorem \ref{thm_num_pts_dishonest} it follows that the dishonesty limit is upper bounded by $O(\log n)$.
From Lemma  \ref{lm_PartialTruth_lb}, we know that the dishonesty limit is lower bounded by $\Omega(\log n)$. 
Therefore:
 
\begin{theorem}\label{thm_PartialTruth}
The dishonesty limit is:
\begin{align*} 
DL  = \Theta(\log n) 
\end{align*}
\end{theorem}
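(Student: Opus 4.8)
The final statement, Theorem \ref{thm_PartialTruth}, asserts that the dishonesty limit is $\Theta(\log n)$. This is fundamentally a matter of combining two bounds that have already been established earlier in the paper, so my plan is to assemble the $\Theta$ result by squeezing $DL$ between a matching lower and upper bound. The key observation is that $\Theta$ is precisely the conjunction of a $\Omega$ lower bound and an $O$ upper bound, and both halves are available from prior results in the excerpt.

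The plan is as follows. First I would invoke Lemma \ref{lm_PartialTruth_lb}, which directly gives the lower bound $DL \ge \Omega(\log n)$. Since $DL$ is defined (Definition \ref{def_PartialTruth}) as a minimum over all minimal mechanisms with limited knowledge and all their equilibrium strategy profiles, this lower bound applies to every such mechanism, and in particular it is a genuine lower bound on the quantity $DL$ itself. Second, I would establish the matching upper bound by exhibiting a single minimal mechanism with limited knowledge that attains $O(\log n)$ dishonest reports. Here the candidate is AdaPTS: by Lemma \ref{lm_knowlege_pts}, the information structure of AdaPTS is allowed to be a limited knowledge, so AdaPTS qualifies as one of the mechanisms $\mathcal M$ over which the minimum in Definition \ref{def_PartialTruth} is taken. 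By Theorem \ref{thm_num_pts_dishonest}, AdaPTS admits a strict equilibrium strategy profile in which the expected number of non-truthful reports is $O(\log n)$. Because $DL$ is a minimum over mechanisms and strategy profiles, the existence of this particular $(\mathcal M, \sigma)$ pair with $d(\mathcal M, n, \sigma) = O(\log n)$ forces $DL \le O(\log n)$.

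Combining the two directions, I obtain $\Omega(\log n) \le DL \le O(\log n)$, which is exactly $DL = \Theta(\log n)$, completing the proof. The logical structure is that the $\min$ in the definition of $DL$ cuts both ways: it makes the lower bound of Lemma \ref{lm_PartialTruth_lb} a bound on $DL$ (since no mechanism can do better than $\Omega(\log n)$), and it makes the achievability result of Theorem \ref{thm_num_pts_dishonest} an upper bound on $DL$ (since AdaPTS witnesses that at least one mechanism does this well).

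The main obstacle, such as it is, is not a computational one but a matter of confirming that the two bounds really refer to the same quantity and are directly comparable. Specifically, I would want to double-check that the lower bound of Lemma \ref{lm_PartialTruth_lb} is stated for the same class of minimal, limited-knowledge mechanisms that AdaPTS belongs to, so that the lower and upper bounds bracket a common $DL$; and that AdaPTS genuinely falls within the scope of Definition \ref{def_PartialTruth} via Lemma \ref{lm_knowlege_pts}. Once these membership and scope conditions are verified, the assembly is immediate and essentially syntactic — the real mathematical work has already been done in proving the lower bound (via the $\Omega(\log n)$ regret of multi-armed bandit algorithms) and the upper bound (via the UCB1 regret guarantee applied to AdaPTS), and this final theorem is simply the sandwiching step.
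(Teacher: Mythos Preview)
Your proposal is correct and matches the paper's own argument essentially line for line: the paper combines Lemma~\ref{lm_PartialTruth_lb} for the $\Omega(\log n)$ lower bound with Theorem~\ref{thm_num_pts_dishonest} (together with Lemma~\ref{lm_knowlege_pts} to certify that AdaPTS is a limited-knowledge mechanism) for the $O(\log n)$ upper bound, and then simply states the $\Theta(\log n)$ conclusion. Your observation that the only ``obstacle'' is verifying AdaPTS falls within the scope of Definition~\ref{def_PartialTruth} is exactly the role Lemma~\ref{lm_knowlege_pts} plays in the paper.
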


\textbf{Importance of the results.} With the dishonesty limit concept, we are able to quantify what is possible 
in the context of minimal elicitation with limited knowledge. An example of an objective that is possible to reach with 
partially truthful mechanisms is elicitation of accurate aggregates. In particular, suppose that the goal is to elicit 
a distribution of signal values. From Theorem \ref{thm_num_pts_dishonest}, we know that this is achievable with AdaPTS. 
Namely, if we denote the normalized histogram of reports by $H_{reports}$  and  the normalized 
histogram of signal values by $H_{values}$, then it follows from the theorem that their expected difference is bounded by:
\begin{align*}
\mathbbm E(\sum_x |H_{reports}(x) - H_{values}(x)|) \le \frac{O(\log n)}{n}
\end{align*}
which approaches $0$ as $n$ increases. Therefore, although the truthfulness of all agents is not
guaranteed, the aggregate obtain by the partially truthful elicitation converges to the one that would 
be obtained if all agents were honest. 

\section{Additional Considerations}

In this section, we discuss how to make our mechanism AdaPTS applicable to the situations when there exist biases in 
reporting errors (i.e., $\epsilon$ parameter is biased towards a particular value) or the phase transitions are smoother (e.g., 
because reporting strategies are not perfectly synchronized). 

\textbf{Allowing Bias in the Non-strategic Reports. } 
Allowing biases in reporting errors is important for cases when some agents do not strategize w.r.t. parameter $P$, e.g.,
these agents are truthful regardless of the payment function, or they report heuristically without observing their private 
signal (for example, a fraction of agents reports $0$, whereas the other agents are strategic).
An example of the phase transition diagram that incorporates a bias is shown in Figure \ref{fig:PartialTruth_ext2}. 
Since the disagreement is highest for phase 1 of the diagram, function \eqref{eq_disagree_fun} is not a good 
indicator of agents' truthfulness.  However, frequency $freq_{2}(x)$ of the truthful phase is the closest one to the average of 
$f^{m}(x) = freq_{1}(x)$ and $f_{m}(x) = freq_{3}(x)$. By the same reasoning as in Section \textit{Our Approach}, 
the expression $(f^{m}(x) - freq_{i}(x)) \cdot (freq_{i}(x) - f_{m}(x))$
is maximized for $i = 2$. 
This leads us the following indicator function:
\begin{align*}
I (Y_1, ...., Y_k ) = \sum_{x} &(f^{m}(x) - \hat f(x)) \cdot (\hat f(x) - f_{m}(x))
\end{align*}
where $\hat f$ is the frequency of reports equal to $x$ among $\{Y_1, ..., Y_k\}$.
The problem, however, is that the AdaPTS mechanism does not know $f^m$ and $f_m$. 
Nevertheless, it can estimate them in an online manner from agents' reports. 

To adjust AdaPTS, we first define time intervals $\theta_i = \{ 2^{i}, ... ,2^{i+1}\}$ and to each 
time interval associate a different UCB1 algorithm. For periods $t \in \theta_i$, we run a (separate) 
UCB1 algorithm 
with the indicator function  
that uses estimators $\hat f_m(x)$ and $\hat f^m(x)$. The estimators are initially set to $\hat f_m(x) = 0$ 
and $\hat f^m(x) = 1$ for all values $x$, and they change after each time interval $\theta_i$. More precisely,
they are updated by finding respectively the minimum and the maximum frequency of reports equal to $x$ 
among all possible arms (representative distributions $P$). Since UCB1 sufficiently explores suboptimal arms,
the estimates $\hat f^m(x)$ and $\hat f_m(x)$ become reasonable accurate at some point, which implies a 
sublinear number of dishonest reports for a longer elicitation period.

\setlength{\abovecaptionskip}{-5pt}
\setlength{\belowcaptionskip}{-5pt}
\begin{figure}[!h]
\centering
\includegraphics[width=0.67\columnwidth]{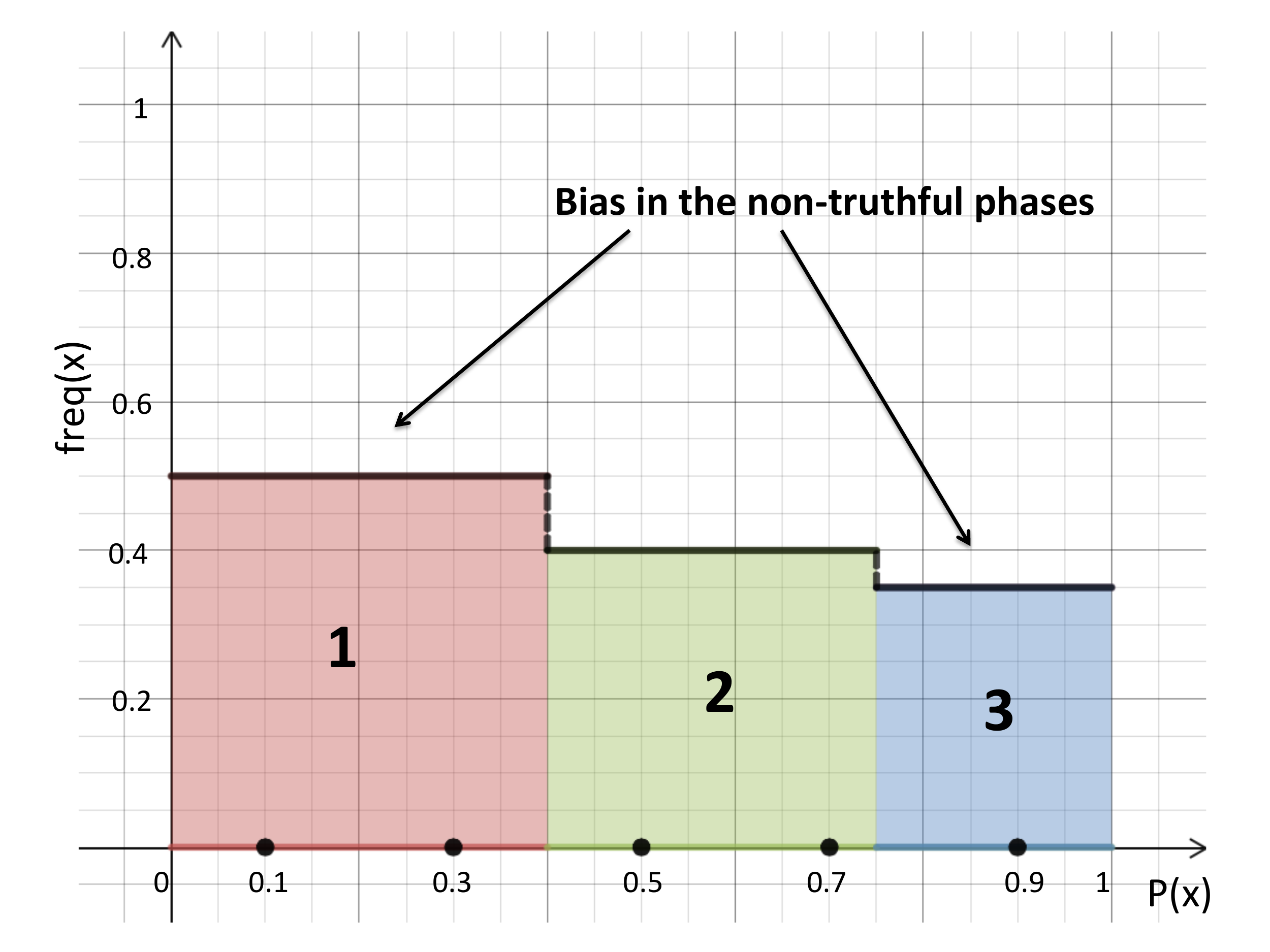}
\caption{Phase transition diagram with biased phases.}
\label{fig:PartialTruth_ext2}
\end{figure} 

\textbf{Allowing Smoother Transitions Between Phases. }
Agents may not be perfectly synchronized in changing between phases. Nonetheless, we can expect that the resulting behaviour would produce a similar 
phase transition diagram, as illustrated in Figure \ref{fig:PartialTruth_ext1}. However, the simple indicator function defined by expression \eqref{eq_disagree_fun} 
is no longer a suitable choice for detecting the truthful reporting phase.   

\setlength{\abovecaptionskip}{-5pt}
\setlength{\belowcaptionskip}{-5pt}
\begin{figure}[!h]
\centering
\includegraphics[width=0.67\columnwidth]{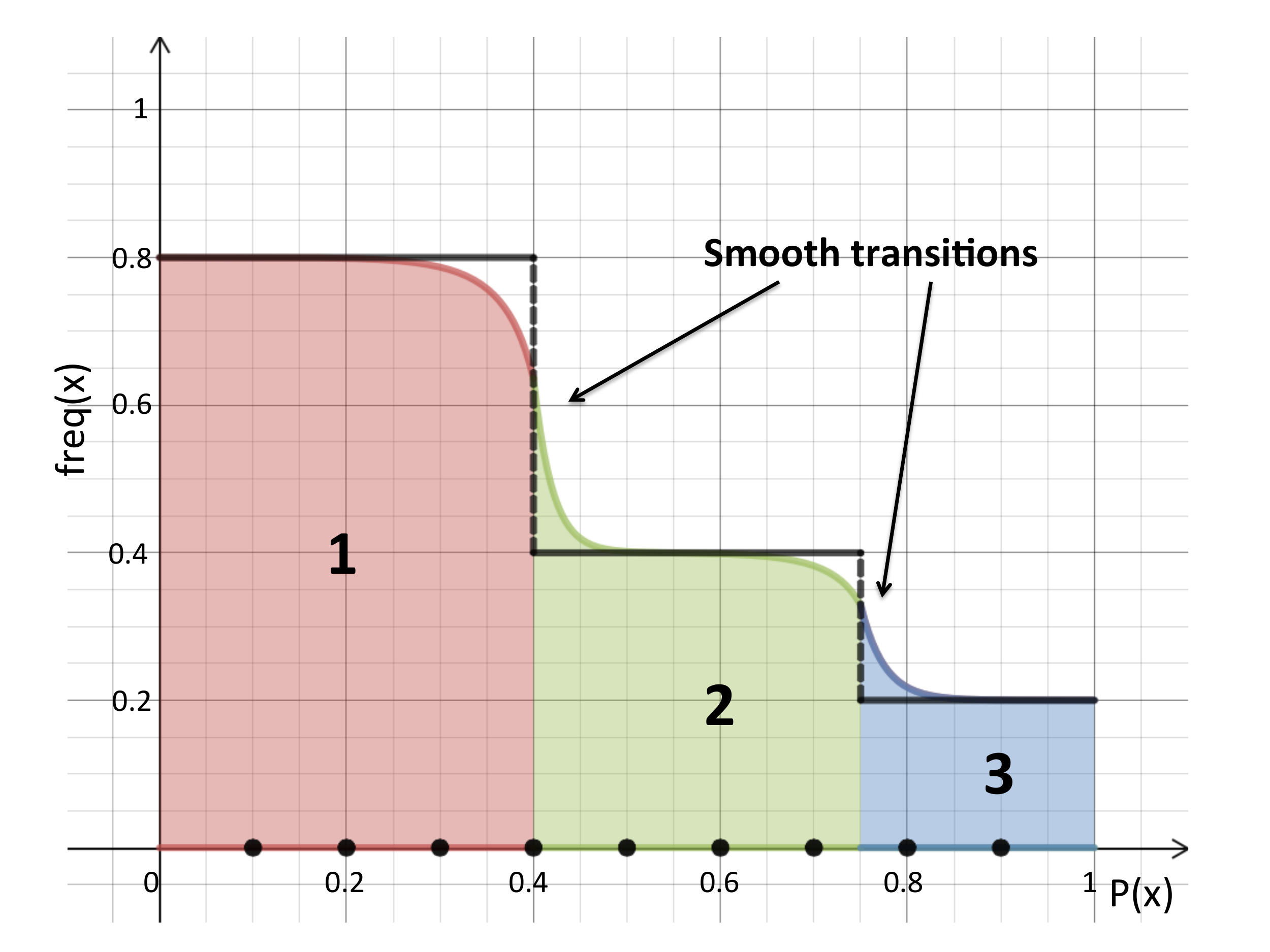}
\caption{Phase transition diagram with smooth transitions.}
\label{fig:PartialTruth_ext1}
\end{figure} 

In order to see this, we have added additional representative distributions $P$. Notice that we obtain the same disagreement 
for $P(x) = 0.4$ as for $P(x) = 0.6$. However, $P(x) = 0.6$ is a better choice,
because $P(x) = 0.4$ belongs to a transition phase where a high level of disagreement is due to asynchronous behaviour of agents. 
Notice that the phase diagram experiences rapid changes in transitions between two phases. This means  
that we can avoid selection of undesirable $P$ distributions by introducing a proper regularization term. That is, we can separate $k$
agents that arrive at time $t$ into two groups, $G_1$ and $G_2$, and reward each group with a slightly different $P$ from the one selected by UCB1. 
For example, if $P(x) = 0.4$ is selected, we could reward one group using PTS with $P(x) = 0.35$ and the other group using PTS with $P(x) = 0.45$. 
If group $G_1$ has agents $\{a_1,..., a_4\}$ and group $G_2$ has agents $\{b_1, ..., b_4\}$, then a possible indicator function could be: 
\begin{align*}
&I (Y_1, ...., Y_k ) = \underbrace{\sum_{a_i,a_j \in G_1} \mathbbm 1_{Y_{a_i} \ne Y_{a_j}} +\sum_{b_i,b_j \in G_2} \mathbbm 1_{Y_{b_i} \ne Y_{b_j}}}_{\text{disagreement}} \\
&-  \underbrace{\lambda \cdot (\mathbbm 1_{Y_{a_1} \ne Y_{a_2}} - \mathbbm 1_{Y_{b_1} \ne Y_{b_2}}) \cdot (\mathbbm 1_{Y_{a_3} \ne Y_{a_4}} -  \mathbbm 1_{Y_{b_3} \ne Y_{b_4}})}_{\textit{regularization}}
\end{align*}
where $\lambda > 0$ is the regularization factor. The regularization term is in expectation equal to 
the square of the difference between the expected disagreement of agents in group $G_1$ and 
the expected disagreement of agents in group $G_2$. Some insight on how to adjust $\lambda$
might be a priori needed, but this information is a limited knowledge. With this modification of the 
indicator function, we can apply AdaPTS. 

\section{Conclusion}

We investigated the asymptotic behavior of partially truthful minimal peer prediction mechanisms 
with a limited knowledge. As shown by Theorem \ref{thm_PartialTruth}, 
any such mechanism results in $O(\log n)$ {\em redundant} (non-truthful) reports. 
In contrast, one of the most known knowledge-bonded elicitation mechanism, Bayesian Truth Serum \cite{P:04}, 
elicits from each agent her signal value and her prediction about other agents, having in total $O(n)$ additional 
reports.  
Thus, our results quantify the necessary overhead when the minimality in reported information and the knowledge of 
a mechanism is preferred to full truthfulness. One of the most important future steps would be to make the 
mechanism robust in terms of collusion resistance (e.g., measured using replicator dynamics \cite{Shnayder:2016:IJCAI}),
which, in general, can be challenging even for a more robust settings \cite{DBLP:journals/corr/GaoWL16}.

\vspace{0.5mm}
{\bfseries Acknowledgments}
This work was supported in part by the Swiss National Science Foundation (Early Postdoc Mobility fellowship).

\bibliographystyle{aaai}
\bibliography{PartialTruth}

\begin{thebibliography}{}

\bibitem[\protect\citeauthoryear{Agrawal}{1995}]{10.2307/1427934}
Agrawal, R.
\newblock 1995.
\newblock Sample mean based index policies with o(log n) regret for the
  multi-armed bandit problem.
\newblock {\em Advances in Applied Probability} 27(4):1054--1078.

\bibitem[\protect\citeauthoryear{Audibert and
  Bubeck}{2010}]{Audibert:2010:RBM:1756006.1953023}
Audibert, J.-Y., and Bubeck, S.
\newblock 2010.
\newblock Regret bounds and minimax policies under partial monitoring.
\newblock {\em J. Mach. Learn. Res.} 11:2785--2836.

\bibitem[\protect\citeauthoryear{Audibert, Munos, and
  Szepesv\'{a}ri}{2009}]{Audibert:2009:ETU:1519541.1519712}
Audibert, J.-Y.; Munos, R.; and Szepesv\'{a}ri, C.
\newblock 2009.
\newblock Exploration-exploitation tradeoff using variance estimates in
  multi-armed bandits.
\newblock {\em Theor. Comput. Sci.} 410(19):1876--1902.

\bibitem[\protect\citeauthoryear{Auer, Cesa-Bianchi, and Fischer}{2002}]{A:02}
Auer, P.; Cesa-Bianchi, N.; and Fischer, P.
\newblock 2002.
\newblock Finite-time analysis of the multiarmed bandit problem.
\newblock {\em Machine Learning} 47(2-3):235--256.

\bibitem[\protect\citeauthoryear{Dasgupta and Ghosh}{2013}]{DG:13}
Dasgupta, A., and Ghosh, A.
\newblock 2013.
\newblock Crowdsourced judgement elicitation with endogenous proficiency.
\newblock In {\em Proceedings of the 22nd ACM International World Wide Web
  Conference (WWW'13)}.

\bibitem[\protect\citeauthoryear{Faltings and Radanovic}{2017}]{FR:17}
Faltings, B., and Radanovic, G.
\newblock 2017.
\newblock {\em Game Theory for Data Science: Eliciting Truthful Information}.
\newblock Morgan \& Claypool Publishers.

\bibitem[\protect\citeauthoryear{Faltings \bgroup et al\mbox.\egroup
  }{2014}]{FPBJ:14}
Faltings, B.; Pu, P.; Tran, B.~D.; and Jurca, R.
\newblock 2014.
\newblock Incentives to counter bias in human computation.
\newblock In {\em Proceedings of the Second {AAAI} Conference on Human
  Computation and Crowdsourcing (HCOMP'14)}.

\bibitem[\protect\citeauthoryear{Faltings, Jurca, and Radanovic}{2017}]{FJR:17}
Faltings, B.; Jurca, R.; and Radanovic, G.
\newblock 2017.
\newblock Peer truth serum: Incentives for crowdsourcing measurements and
  opinions.
\newblock {\em CoRR} abs/1704.05269.

\bibitem[\protect\citeauthoryear{Frongillo and Witkowski}{2016}]{FW:16}
Frongillo, R., and Witkowski, J.
\newblock 2016.
\newblock A geometric method to construct minimal peer prediction mechanisms.
\newblock In {\em Proceedings of the 30th AAAI Conference on Artificial
  Intelligence (AAAI'16)}.

\bibitem[\protect\citeauthoryear{Gao, Wright, and
  Leyton{-}Brown}{2016}]{DBLP:journals/corr/GaoWL16}
Gao, A.; Wright, J.~R.; and Leyton{-}Brown, K.
\newblock 2016.
\newblock Incentivizing evaluation via limited access to ground truth:
  Peer-prediction makes things worse.
\newblock {\em CoRR} abs/1606.07042.

\bibitem[\protect\citeauthoryear{Garcin and
  Faltings}{2014}]{Garcin:2014:SOP:2892753.2892963}
Garcin, F., and Faltings, B.
\newblock 2014.
\newblock Swissnoise: Online polls with game-theoretic incentives.
\newblock In {\em Proceedings of the Twenty-Eighth AAAI Conference on
  Artificial Intelligence (AAAI'14)}.

\bibitem[\protect\citeauthoryear{Garivier and
  Capp{\'{e}}}{2011}]{DBLP:journals/jmlr/GarivierC11}
Garivier, A., and Capp{\'{e}}, O.
\newblock 2011.
\newblock The {KL-UCB} algorithm for bounded stochastic bandits and beyond.
\newblock In {\em The 24th Annual Conference on Learning Theory (COLT'11)},
  359--376.

\bibitem[\protect\citeauthoryear{Jurca and Faltings}{2008}]{JF:08}
Jurca, R., and Faltings, B.
\newblock 2008.
\newblock Incentives for expressing opinions in online polls.
\newblock In {\em Proceedings of the 9th ACM conference on Electronic commerce
  (EC'08)}.

\bibitem[\protect\citeauthoryear{Jurca and Faltings}{2011}]{JF:11}
Jurca, R., and Faltings, B.
\newblock 2011.
\newblock Incentives for answering hypothetical questions.
\newblock In {\em Workshop on Social Computing and User Generated Content}.

\bibitem[\protect\citeauthoryear{Kamble \bgroup et al\mbox.\egroup
  }{2015}]{K:15}
Kamble, V.; Shah, N.; Marn, D.; Parekh, A.; and Ramachandran, K.
\newblock 2015.
\newblock Truth serums for massively crowdsourced evaluation tasks.
\newblock In {\em the 5th Workshop on Social Computing and User Generated
  Content}.

\bibitem[\protect\citeauthoryear{Kong and Schoenebeck}{2016}]{KS:16a}
Kong, Y., and Schoenebeck, G.
\newblock 2016.
\newblock Equilibrium selection in information elicitation without verification
  via information monotonicity.
\newblock {\em CoRR} abs/1603.07751.

\bibitem[\protect\citeauthoryear{Lai and Robbins}{1985}]{LR:85}
Lai, T.~L., and Robbins, H.
\newblock 1985.
\newblock Asymptotically efficient adaptive allocation rules.
\newblock {\em Advances in Applied Mathematics} 6(1):4--22.

\bibitem[\protect\citeauthoryear{Liu and Chen}{2016}]{DBLP:conf/ijcai/LiuC16}
Liu, Y., and Chen, Y.
\newblock 2016.
\newblock Learning to incentivize: Eliciting effort via output agreement.
\newblock In {\em Proceedings of the 25th International Joint Conference on
  Artificial Intelligence (IJCAI'16)}.

\bibitem[\protect\citeauthoryear{Miller, Resnick, and
  Zeckhauser}{2005}]{MRZ:05}
Miller, N.; Resnick, P.; and Zeckhauser, R.
\newblock 2005.
\newblock Eliciting informative feedback: The peer-prediction method.
\newblock {\em Management Science} 51:1359--1373.

\bibitem[\protect\citeauthoryear{Prelec}{2004}]{P:04}
Prelec, D.
\newblock 2004.
\newblock A bayesian truth serum for subjective data.
\newblock {\em Science} 34(5695):462--466.

\bibitem[\protect\citeauthoryear{Radanovic and Faltings}{2013}]{RF:13}
Radanovic, G., and Faltings, B.
\newblock 2013.
\newblock A robust bayesian truth serum for non-binary signals.
\newblock In {\em Proceedings of the 27th AAAI Conference on Artificial
  Intelligence (AAAI'13)}.

\bibitem[\protect\citeauthoryear{Radanovic and Faltings}{2014}]{RF:14}
Radanovic, G., and Faltings, B.
\newblock 2014.
\newblock Incentives for truthful information elicitation of continuous
  signals.
\newblock In {\em Proceedings of the 28th AAAI Conference on Artificial
  Intelligence (AAAI'14)}.

\bibitem[\protect\citeauthoryear{Radanovic, Faltings, and Jurca}{2016}]{RF:16b}
Radanovic, G.; Faltings, B.; and Jurca, R.
\newblock 2016.
\newblock Incentives for effort in crowdsourcing using the peer truth serum.
\newblock {\em ACM Transactions on Intelligent Systems and Technology (TIST)}
  7:48:1--48:28.

\bibitem[\protect\citeauthoryear{Shnayder \bgroup et al\mbox.\egroup
  }{2016a}]{Shnayder:2016:EC}
Shnayder, V.; Agarwal, A.; Frongillo, R.; and Parkes, D.~C.
\newblock 2016a.
\newblock Informed truthfulness in multi-task peer prediction.
\newblock In {\em Proceedings of the 2016 ACM Conference on Economics and
  Computation (EC'16)}.

\bibitem[\protect\citeauthoryear{Shnayder \bgroup et al\mbox.\egroup
  }{2016b}]{Shnayder:2016:IJCAI}
Shnayder, V.; Agarwal, A.; Frongillo, R.; and Parkes, D.~C.
\newblock 2016b.
\newblock Measuring performance of peer prediction mechanisms using replicator
  dynamics.
\newblock In {\em In Proceedings of the 25th International Joint Conference on
  Artificial Intelligence (IJCAI'16)}.

\bibitem[\protect\citeauthoryear{Waggoner and
  Chen}{2014}]{DBLP:conf/hcomp/WaggonerC14}
Waggoner, B., and Chen, Y.
\newblock 2014.
\newblock Output agreement mechanisms and common knowledge.
\newblock In {\em Proceedings of the Second {AAAI} Conference on Human
  Computation and Crowdsourcing (HCOMP'14)}.

\bibitem[\protect\citeauthoryear{Witkowski and Parkes}{2012a}]{WP:12b}
Witkowski, J., and Parkes, D.~C.
\newblock 2012a.
\newblock Peer prediction without a common prior.
\newblock In {\em Proceedings of the 13th ACM Conference on Electronic Commerce
  (EC'12)}.

\bibitem[\protect\citeauthoryear{Witkowski and Parkes}{2012b}]{WP:12a}
Witkowski, J., and Parkes, D.~C.
\newblock 2012b.
\newblock A robust bayesian truth serum for small populations.
\newblock In {\em Proceedings of the 26th AAAI Conference on Artificial
  Intelligence (AAAI'12)}.

\bibitem[\protect\citeauthoryear{Witkowski}{2014}]{W:14}
Witkowski, J.
\newblock 2014.
\newblock {\em Robust Peer Prediction Mechanisms}.
\newblock Ph.D. Dissertation, Albert-Ludwigs-Universitat Freiburg: Institut fur
  Informatik.

\end{thebibliography}

\section*{ATTACHMENT: Partial Truthfulness in Peer Prediction Mechanisms with Limited Knowledge}

\subsection{Proof of Lemma \ref{lm_correlation} }

\begin{proof}
Using the properties of our model (conditional independence of signal values 
given $\omega$) and Bayes' rule we obtain:
\begin{align*}
\Pr(x|x) &= \int_{\omega} \Pr(x|\omega) \cdot  p(\omega|x) d \omega = \int_{\omega} \Pr(x|\omega)^2 \cdot  \frac{p(\omega)}{\Pr(x)} d \omega \\
&= \frac{\int_{\omega} \Pr(x|\omega)^2 \cdot  p(\omega) d \omega}
	  {\int_{\omega} \Pr(x|\omega) \cdot  p(\omega) d \omega} 
\end{align*}

Jensen's inequality tells us that $\int_{\omega} \Pr(x|\omega)^2 \cdot  p(\omega) d \omega \ge 
(\int_{\omega} \Pr(x|\omega) \cdot  p(\omega) d \omega)^2$, with strict inequality when 
$0< Pr(x|\omega) < 1$ (notice that $Pr(x|\omega)$ is not constant due to stochastic relevance). As $Pr(x|\omega)$ is fully mixed, we have:
\begin{align*}
\Pr(x|x) &> \frac{\left ( \int_{\omega} \Pr(x|\omega) \cdot  p(\omega) d \omega \right )^2}
	  {\int_{\omega} \Pr(x|\omega) \cdot  p(\omega) d \omega} >  \int_{\omega} \Pr(x|\omega) \cdot  p(\omega) d \omega \\
	  &= Pr(x)
\end{align*}
implying the first statement. 

From Bayes' rule it follows that $\Pr(x|x) \cdot \Pr (y|y)$ is equal to:
\begin{align}\label{eq_lm_correlation_1}
&\Pr(x|x) \cdot \Pr(y|y) \nonumber \\
&= \frac{\int_{\omega} \Pr(x|\omega)^2 \cdot  p(\omega) d \omega}
	  {\int_{\omega} \Pr(x|\omega) \cdot  p(\omega) d \omega} \cdot 
	   \frac{\int_{\omega} \Pr(y|\omega)^2 \cdot p(\omega) d \omega}
	  {\int_{\omega} \Pr(y|\omega) \cdot p(\omega) d \omega}
\end{align}
Similarly, $\Pr(y|x) \cdot \Pr(x|y)$ is equal to:
\begin{align}\label{eq_lm_correlation_2}
&\Pr(y|x) \cdot \Pr(x|y)  \nonumber \\
&= \frac{\int_{\omega} \Pr(y|\omega) \cdot \Pr(x|\omega) \cdot p(\omega) d \omega}
	  {\int_{\omega} \Pr(x|\omega) \cdot p(\omega) d \omega} \nonumber \\
	  &\cdot 
	   \frac{\int_{\omega} \Pr(x|\omega) \cdot \Pr(y|\omega) \cdot p(\omega) d \omega}
	  {\int_{\omega} \Pr(y|\omega) \cdot p(\omega) d \omega}
\end{align}
Notice that expressions \eqref{eq_lm_correlation_1} and \eqref{eq_lm_correlation_2}
have equal denominators, so we only need to compare nominators.
Let $A$ and $B$ be two random variables such that $A = \Pr(x|\omega)$ 
and $B = \Pr(y|\omega)$. We have:
\begin{align*}
&\sgn \left ( \Pr(x|x) \cdot \Pr(y|y) - \Pr(x|x) \cdot \Pr(y|y) \right ) \\
&= \sgn \left ( \mathbbm E_\omega (A^2) \cdot  \mathbbm E_\omega (B^2) - \mathbbm E_\omega (A \cdot B) ^2 \right ) 
\end{align*}
$\mathbbm E_{\omega}$ is expectation over the distribution $p(\omega)$.
Using the Cauchy-Schwarz inequality 
($\mathbbm E_\omega (A^2) \cdot \mathbbm E_\omega (B^2) \ge |\mathbbm E_\omega (A \cdot B)|^2$,
with strict inequality if $A,B \ne 0$ and $A \ne \lambda \cdot B$),
and the fact that $A$ and $B$ are positive random variables that differ due to stochastic relevance, we obtain
that $\Pr(x|x) \cdot \Pr(y|y) > \Pr(y|x) \cdot \Pr(x|y)$ for $x \ne y$.

The third claim follows analogously. Namely, the nominator of \eqref{eq_lm_correlation_1} is in more general 
form equal to $\mathbbm E_\omega (A_1^2) \cdots  \mathbbm E_\omega (A_{m'}^2) $, while the nominator of  
\eqref{eq_lm_correlation_2} is in more general form equal to $\mathbbm E_\omega (A_1 \cdot A_2) \cdots  \mathbbm E_\omega (A_{m'} \cdot A_1) $,
where $A_i =  \Pr(x_i|\omega)$. By applying the Cauchy-Schwarz inequality, we obtain the claim. 
\end{proof}

\subsection{Proof of Lemma \ref{lm_knowlege_pts}}

\begin{proof}
Consider two arbitrary probability distribution functions $P_1$ and $P_2$, such that: $P_1(x) = P_2(y)$,
$P_1(y) = P_2(x)$, and $P_1(z) = P_2(z)$ for $z \ne x$ and $z \ne y$. PTS with $P_1$ is 
truthful for belief model defined by $Pr_1(z|z) = P_1(z) + (1- P_1(z)) \cdot \beta$, 
$Pr(w|z) =  P_1(w) - P_1(w) \cdot \beta$ , $\forall z, w \in \{0, ..., m-1\}: 
w \ne z$, where $\beta \in (0, 1)$. Similarly, we define posteriors $Pr_2$ based on $P_2$. Notice that
 $Pr_2(w|z) = Pr_1(u(w)|u(z))$,
where $u:\{0, ..., m-1\} \rightarrow \{0, ..., m-1\}$:
\begin{align*}
u(z) = \begin{cases} 
	y &\mbox{ if } z = x\\
	x &\mbox{ if } z = y\\
	z &\mbox{ if } z \ne x \land z \ne y\\
\end{cases}
\end{align*}

Now, suppose that $P_1(x) > P_1(y)$, and set up $\beta$ such that: $Pr_1(x|x) > Pr_1(y|x)$ and
$Pr_1(y|y) < Pr_1(x|y)$. By using the same procedure of proving as in Theorem 1 of \cite{RF:13}, we obtain
that if a mechanism $\tau$ is incentive compatible for both $Pr_1$ and $Pr_2$, 
then:
\begin{align*}
(Pr_1(x|x) - Pr_1(y|x)) \cdot (\Delta(x) - \Delta(y)) &> 0
\\
(Pr_1(y|y) - Pr_1(x|y)) \cdot (\Delta(x) - \Delta(y)) &> 0
\end{align*}  
where $\Delta(x) = \tau(x,x) - \tau(x,y)$ and $\Delta(y) =  \tau(y,x) - \tau(y,y)$. 
The two inequalities, however, contradict by the choice of $Pr_1$. 

Therefore, even though a mechanism might know the size of a region in probability simplex $\mathcal P$
that contains distributions $P$ for which PTS is BNIC, it might not be able to construct a BNIC payment rule. 
\end{proof}

\subsection{Proof of Lemma \ref{lm_PartialTruth_lb}}

\begin{proof}
By Definition \ref{def_lk}, there exists no single mechanism that
incentivizes agents to report honestly if their belief model is arbitrary. Suppose now that 
we have two mechanisms $\tau_1$ and $\tau_2$ that are BNIC under two different 
(complementary) belief models, so that a particular group of agents is
truthful only for one mechanism. We can consider this situation from the perspective 
of a meta-mechanism $\mathcal M$ that has to choose between $\tau_1$ and $\tau_2$.
At a time-step $t$, $\mathcal M$ obtains $k$ reports --- feedback, which, in general, is 
insufficient for determining whether agents lied or not because agents' observations 
are stochastic, while their reports contain noise. Therefore, the problem of choosing 
between $\tau_1$ and $\tau_2$ is an instantiation of a multi-armed bandit problem
(see Section \textit{Mechanism: Adaptive PTS (AdaPTS)}). 
Since, in general, any MAB algorithm pulls suboptimal 
arms $\Omega(\log (N))$ number of times in expectation where $N$ is the total number of pulls (e.g., see \cite{LR:85}), 
we know that meta mechanism $\mathcal M$ will in expectation choose the wrong (untruthful) payments at least $O(\log n)$ times. 
This produces $O(\log n)$ untruthful reports in expectation because non-truthful payments are not truthful for at least one signal value 
and each signal value has strictly positive probability of being endorsed by an agent. 
\end{proof} 

\subsection{Proof of Proposition \ref{prop_truth_exists}}

The proposition is the direct consequence of the two following lemmas. 

\begin{lemma}\label{lm_region_point}
If there exists an informative distribution $P \in \mathcal P$, where $\mathcal P$ is probability 
simplex for $m$-ary signal space, then there exists a region $\mathcal R \subset \mathcal P$ 
such that all $P' \in \mathcal R$ are informative. 
\end{lemma}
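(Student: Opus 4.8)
The plan is to exploit the fact that informativeness, as encoded by the self-predicting condition \eqref{eq_self_prediction}, is defined by a \emph{finite} family of \emph{strict} inequalities whose two sides depend continuously on $P$. First I would rewrite the condition in a form that is manifestly continuous (indeed affine) in the coordinates of $P$. For each ordered pair $(x,y)$ with $y \ne x$, clearing denominators in \eqref{eq_self_prediction} --- which is legitimate because any informative $P$ is fully mixed, so $P(x), P(y) > 0$ --- yields the equivalent requirement
\[
g_{x,y}(P) := \Pr(x|x)\,P(y) - \Pr(y|x)\,P(x) > 0,
\]
where $\Pr(x|x)$ and $\Pr(y|x)$ are fixed quantities determined by the belief model and independent of $P$. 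Thus $P$ is informative if and only if $g_{x,y}(P) > 0$ for every $x$ and every $y \ne x$, together with $P$ lying in the relative interior $\mathrm{int}(\mathcal P)$ of the simplex.

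Second, I would observe that each $g_{x,y}$ is linear, hence continuous, in $P$, so $\{P : g_{x,y}(P) > 0\}$ is open in the relative topology of $\mathcal P$. The informative set is the intersection of these finitely many relatively open sets with the open simplex $\mathrm{int}(\mathcal P)$, and a finite intersection of relatively open sets is again relatively open. Third, I would invoke the hypothesis: by assumption there exists at least one informative $P_0$, so the informative set is nonempty and open. Openness then produces a radius $\delta > 0$ such that every $P'$ within Euclidean distance $\delta$ of $P_0$ (and on the simplex) remains informative and fully mixed; taking $\mathcal R$ to be this ball intersected with $\mathcal P$ gives the desired region, and since $\mathrm{int}(\mathcal P)$ is an $(m-1)$-dimensional manifold, $\mathcal R$ has nonempty relative interior rather than being a lower-dimensional slice.

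The argument is essentially a continuity/openness observation, so there is no computationally hard step; the two points requiring care are (i) justifying the clearing of denominators, which depends on restricting attention to the fully mixed part of the simplex so that $\mathcal R$ also avoids the divisions by zero in \eqref{eq_pts}, and (ii) confirming that the resulting region is full-dimensional (equivalently, has positive $(m-1)$-dimensional measure). I expect (ii) to be the conceptually important point, since it is precisely fatness of $\mathcal R$ that guarantees a sufficiently fine uniform discretization of $\mathcal P$ in AdaPTS will place a representative $P$ inside $\mathcal R$; fortunately it follows immediately once relative openness is established.
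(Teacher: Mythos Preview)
Your proposal is correct and is essentially the same argument as the paper's: both exploit that the self-predicting condition is a finite family of strict inequalities whose sides depend continuously on $P$, so the informative set is relatively open and hence contains a neighborhood of the given $P$. The only difference is packaging---you phrase it via the affine functions $g_{x,y}$ and openness of their positivity sets, whereas the paper works directly with the ratio form and produces an explicit $\delta$ such that $\frac{\Pr(x|x)}{P(x)+\delta} > \frac{\Pr(y|x)}{P(y)-\delta}$, then takes $\mathcal R = \{P' \in \mathcal P : |P'(x)-P(x)| < \delta \text{ for all } x\}$.
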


\begin{proof}
Since $P$ is informative:
\begin{align*}
\frac{\Pr(x|x)}{P(x)} > \frac{\Pr(y|x)}{P(y)} 
\end{align*}
for all $x \ne y$. The strictness of the inequality implies that
there exists $0 < \delta << \min_z Pr(z)$ such that for all $x \ne y$ we have:
\begin{align*}
\frac{Pr(x|x)}{P(x) + \delta} > \frac{Pr(y|x)}{P(y) - \delta} 
\end{align*}
In other words, we have that any $P' \in \mathcal P$ 
for which $|P'(x) - P(x)| \le \delta$, $x \in \{1, ..., m\}$,
satisfies:
\begin{align*}
\frac{Pr(x|x)}{P'(x)} > \frac{Pr(y|x)}{P'(y)} 
\end{align*}
By putting $\mathcal R = \{ P' | P' \in \mathcal P, \forall x : |P'(x) - P(x)| < \delta \}$,
we obtain the claim. 
\end{proof}

\begin{lemma}\label{lm_point}
There exists an informative distribution $P$. 
\end{lemma}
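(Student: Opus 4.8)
The plan is to take logarithms and recognize condition~\eqref{eq_self_prediction} as a system of difference constraints whose feasibility is governed exactly by the cyclic inequalities of Lemma~\ref{lm_correlation}. Writing $u_x := \log P(x)$, a fully mixed $P$ is informative if and only if
\begin{align*}
u_x - u_y < c_{xy}, \qquad c_{xy} := \log\frac{\Pr(x|x)}{\Pr(y|x)},
\end{align*}
holds for every ordered pair $x \ne y$ in $\{0,\dots,m-1\}$. These constraints are invariant under adding a common constant to all $u_x$, so any real solution $u$ produces an admissible $P(x) = e^{u_x}$, which is fully mixed since each $P(x) > 0$, after renormalizing onto the simplex, a rescaling that preserves the strict inequalities. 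Hence it suffices to prove that this difference system is feasible.

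First I would invoke the standard no-negative-cycle feasibility criterion for difference constraints (shortest-path potentials / Bellman--Ford): the system $\{u_a - u_b < c_{ab}\}$ over the complete directed graph on $\{0,\dots,m-1\}$ admits a solution if and only if every directed cycle has strictly positive total weight. Necessity is immediate, since summing the constraints around any cycle makes the left-hand side telescope to $0$. For sufficiency I would let $\mu > 0$ be the minimum weight over the finitely many simple cycles, relax each bound to $c_{ab} - \mu/m$, and note that every simple cycle of length $\ell \le m$ then has nonnegative weight (since $\mu - \ell\,\mu/m \ge 0$); applying the classical non-strict feasibility result and observing that the strict original bounds are then implied gives the desired solution.

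The remaining, and central, step is to verify the positive-cycle hypothesis, and this is exactly where Lemma~\ref{lm_correlation} enters. Along any simple directed cycle $x_1 \to x_2 \to \cdots \to x_{m'} \to x_1$ of distinct vertices, the weights telescope multiplicatively:
\begin{align*}
\sum_{i=1}^{m'} c_{x_i x_{i+1}}
= \log\frac{\Pr(x_1|x_1)\cdots\Pr(x_{m'}|x_{m'})}{\Pr(x_2|x_1)\cdots\Pr(x_1|x_{m'})},
\end{align*}
with indices read cyclically. The general product inequality of Lemma~\ref{lm_correlation} (and, for length-two cycles, its pairwise special case) states precisely that the numerator strictly exceeds the denominator, so each such cycle weight is strictly positive; a general directed cycle decomposes into simple ones and therefore also has positive weight. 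The main obstacle is conceptual rather than computational: recognizing that the existence of an informative distribution is equivalent to a difference-constraint feasibility question, and that the seemingly ad hoc cyclic products established in Lemma~\ref{lm_correlation} are exactly the cycle weights that must be controlled. Once this correspondence is in place, the remaining arguments are routine.
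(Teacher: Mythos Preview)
Your argument is correct and takes a genuinely different route from the paper. The paper proceeds by explicit construction: it fixes a reference value $a$, sets $P(x)\propto\min_y\bigl[\tfrac{\Pr(x|x)}{\Pr(y|x)}\cdot\tfrac{\Pr(y|y)}{\Pr(a|y)}\bigr]$, and then verifies the self-predicting inequalities case by case (pairs involving $a$ and pairs not involving $a$), invoking Lemma~\ref{lm_correlation} at each step and finally perturbing to make one weak inequality strict. Your approach instead passes to logarithms, recognizes the system as difference constraints on a complete digraph, and reduces feasibility to the positive-cycle criterion via the standard Bellman--Ford potential argument; Lemma~\ref{lm_correlation} then enters once, as exactly the statement that every simple cycle has positive weight. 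The paper's construction buys an explicit formula for an informative $P$ (and, interestingly, its formula is essentially a two-hop shortest-path potential toward $a$, so the two proofs are closer than they first appear). Your route buys conceptual clarity: it shows that the cyclic product inequalities of Lemma~\ref{lm_correlation} are not merely sufficient but precisely the obstruction to feasibility, which explains why the lemma is stated in that seemingly ad hoc form.
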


\begin{proof} 
We only need to show the existence of $P$. Consider a specific signal value $a \in \{0, ..., m-1 \}$
and any other signal value $x \in \{0, ..., m - 1 \}$. Let us define $P$ as:
\begin{align*}
&P(a) = \frac{1}{\alpha} \\
&P(x) = \frac{1}{\alpha} \cdot \min_{y \in \{0, ..., m-1\}} \left [ \frac{\Pr(x|x)}{\Pr(y|x)} \cdot \frac{\Pr(y|y)}{\Pr(a|y)}  \right ] 
\end{align*}
where $\frac{1}{\alpha}$ is a normalization factor so that $\sum_{y \in \{0, ..., m-1\}} P(y) = 1$.
Notice that by Lemma \ref{lm_correlation}:
\begin{align}\label{eq_beta}
 \frac{\Pr(x|x)}{\Pr(y|x)} \cdot \frac{\Pr(y|y)}{\Pr(a|y)} >  \frac{\Pr(x|a)}{\Pr(a|a)}
\end{align}
holds for any $y \in \{0, ..., m-1\}$.  
To prove that $P$ is informative, it is sufficient to show that for any signal values $x \ne z \ne a$ we have:
\begin{align*}
\frac{\Pr(a|a)}{P(a)} > \frac{\Pr(x|a)}{P(x)}\\
\frac{\Pr(x|x)}{P(x)} \ge \frac{\Pr(a|x)}{P(a)}\\
\frac{\Pr(x|x)}{P(x)} > \frac{\Pr(z|x)}{P(z)}
\end{align*} 
Provided that these inequalities hold, the second one can be made strict (while
keeping the other two inequalities strict as well) by reducing all $P(x)$, $x \ne a$, by 
a small enough value, and then re-normalizing $P$. 

The first inequality follows from $\frac{\Pr(a|a)}{P(a)} = \alpha \cdot \Pr(a|a)$
and inequality \eqref{eq_beta}:
\begin{align*}
\frac{\Pr(x|a)}{P(x)} &= \alpha \cdot \frac{\Pr(x|a)}{\min_{y \in \{0, ..., m-1\}} \left [ \frac{\Pr(x|x)}{\Pr(y|x)} \cdot \frac{\Pr(y|y)}{\Pr(a|y)} \right ]} \\
&< \alpha \cdot \frac{\Pr(x|a)}{ \frac{\Pr(x|a)}{\Pr(a|a)}} = \alpha \cdot \Pr(a|a) = \frac{\Pr(a|a)}{P(a)}
\end{align*} 
We obtain the second inequality by putting $y = a$: 
 \begin{align*}
 \frac{\Pr(x|x)}{P(x)} &= \alpha \cdot \frac{\Pr(x|x)}{\min_{y \in \{0, ..., m-1\}} \left [ \frac{\Pr(x|x)}{\Pr(y|x)} \cdot \frac{\Pr(y|y)}{\Pr(a|y)} \right ] } 
 \\&\ge \alpha \cdot \frac{\Pr(x|x)}{\frac{\Pr(x|x)}{\Pr(a|x)} } = \alpha \cdot \Pr(a|x) = \frac{\Pr(a|x)}{P(a)}
\end{align*}
For the third inequality, we use the second inequality
and inequality $\Pr(x|x) \cdot \Pr(z|z) > \Pr(z|x) \cdot \Pr(x|z)$, which follows from Lemma \ref{lm_correlation}. 
We have:
\begin{align*}
 &\frac{\Pr(z|x)}{P(z)} < \frac{\Pr(x|x) \cdot \Pr(z|z)}{\Pr(x|z)} \cdot  \frac{1}{P(z)}\\
 &=\frac{\Pr(x|x) \cdot \Pr(z|z)}{\Pr(x|z)} \cdot \frac{\alpha}{\min_{y \in \{0, ..., m-1\}} \left [ \frac{\Pr(z|z)}{\Pr(y|z)} \cdot \frac{\Pr(y|y)}{\Pr(a|y)} \right ] } 
 \\&\le\alpha \cdot \frac{\Pr(x|x) \cdot \Pr(z|z)}{\Pr(x|z) \cdot \left [ \frac{\Pr(z|z)}{\Pr(x|z)} \cdot \frac{\Pr(x|x)}{\Pr(a|x)} \right ] } = \alpha \cdot Pr(a|x)
 \\&= \frac{\Pr(a|x)}{P(a)} \le \frac{\Pr(x|x)}{P(x)} 
\end{align*}
\end{proof}

\subsection{Proof of Proposition \ref{prop_exist_non_surjective}}

\begin{proof}
One equilibrium non-surjective strategy profile is when agents report value $x$ such that $P(x) = \min_z P(z)$. 
Let us denote by $\tilde Pr$ an agent $i$'s belief regarding the report of her peer agent $j$, i.e., $Y_{j}$, for the considered 
strategy profile. 
Notice that $\tilde Pr(y|x) = \frac{\epsilon}{m} \ge 0$ for $y \ne x$
(due to the reporting noise). However, the strategy profile of reporting $x$ is an equilibrium because 
the expected value of $\hat \epsilon$ is equal to $\epsilon$, so in expectation
$c \cdot \frac{\tilde Pr(y|x)}{P(y)} - d = c \cdot \frac{\epsilon}{m \cdot P(y)} - c \cdot \frac{\epsilon}{m \cdot P(y)}= 0$ for $ y \ne x$, while $c \cdot \frac{\tilde Pr(x|x)}{P(x)} - d > 0$.
That is, an agent's expected payment is strictly maximized when she reports $x$.
\end{proof}

\subsection{Proof of Lemma \ref{lm_disagree}}

\begin{proof}
Since a non-surjective reporting strategy is a non-surjective function of observation $X_i$, we know that $\rho(X_i)$ takes
values in a strict subset of all possible signal values.
The disagreement function is linear in $\mathbbm 1_{Y_i \ne Y_j}$, so it is sufficient to show that $\frac{1}{1-\epsilon} \cdot (\mathbbm 1_{Y_i \ne Y_j} - \frac{\epsilon}{m})$ is in expectation 
greater for truthfulness than for non-surjective reporting strategy profile. In expectation, the expression is equivalent to saying whether two reports of rational agents disagree, 
which is equal to $1 - p_{a}$, where $p_{a}$ is the probability of agreement. The probability of agreement 
in a non-surjective strategy profile is equal to:
\begin{align*}
p_{a,ns} = \sum_{x} \Pr(\{z | \rho(z) = \rho(x)\}, x)  > \sum_x \Pr(x,x) = p_{a,t}
\end{align*}
where the last term is the probability of agreement for truthfulness. Notice that the inequality is strict because in a non-surjective strategy profile there exist
$x$ and $y$ for which $\rho(x) = \rho(y)$, and thus, $\Pr(\{z | \rho(z) = \rho(x) \}, x) \ge \Pr(\{x, y\}, x) =  \Pr(x, x) + \Pr(y, x) > \Pr(x, x)$. 
The last inequality follows from agents' beliefs being fully mixed.   
Since $p_{a,t}$ is strictly smaller than $p_{a,ns}$ for any non-surjective strategy profile, 
we conclude that the disagreement is strictly greater for truthful reporting than for any other non-surjective strategy profile. 
\end{proof}

\subsection{Proof of Theorem \ref{thm_num_pts_dishonest}}

\begin{proof}
Consider a reporting strategy in which agents are honest whenever $P$ is such that truthful reporting is a strict Bayesian-Nash equilibrium of PTS 
(by Proposition \ref{prop_truth_exists}, such $P$ always exists),
and otherwise they use an equilibrium non-surjective strategy profile (which by Proposition \ref{prop_exist_non_surjective} always exists). 
We use the result that the UCB1 algorithm is expected to pull a suboptimal arm $\log(N)$ times, where $N$ is the total number of pulls \cite{A:02}. 
By Lemma \ref{lm_disagree}, the representative $P$ of a truthful reporting region is an optimal arm, while the representative $P$ of 
a non-truthful region is a suboptimal arm. 
Furthermore, the number of pulls in our case corresponds to $n/k$, where $n$ is the total number of agents and $k$ is
the number of agents at time period $t$. 
Since $k << n$ is a fixed parameter, the expected number of lying agents is of the order of $O(\log(n))$.  
\end{proof}

\end{document}